\theoremstyle{plain} 
\newtheorem{theorem}{Theorem}[section]
\newtheorem{lemma}[theorem]{Lemma}
\newtheorem{corollary}[theorem]{Corollary}
\newtheorem{proposition}[theorem]{Proposition}
\newtheorem{reduction}{Rule}
\newcommand{\uivd}{{unit interval vertex deletion}}
\newcommand{\fis}{{forbidden induced subgraph}}
\newcommand{\lp}[1]{\ensuremath{{\mathtt{lp}(#1)}}}
\newcommand{\rp}[1]{\ensuremath{{\mathtt{rp}(#1)}}}
\newcommand{\stpath}[2]{$#1$-$#2$ path}
\newcommand{\stsep}[2]{$#1$-$#2$ separator}
\title{Unit Interval Vertex Deletion: Fewer Vertices are
  Relevant\thanks{Supported in part by the Hong Kong Research Grants
    Council (RGC) under grant PolyU 252026/15E and the National
    Natural Science Foundation of China (NSFC) under grants 61572414
    and 61420106009.}}
\author{Yuping Ke\thanks{School of Information Science and
    Engineering, Central South University, Changsha, China.  
    \href{jxwang@mail.csu.edu.cn} {\tt jxwang@mail.csu.edu.cn}.
  } \thanks{Department of Computing, Hong
    Kong Polytechnic University, Hong Kong,
    China. \href{mailto:yixin.cao@polyu.edu.hk} {\tt
      yixin.cao@polyu.edu.hk}.}
  \and
  \addtocounter{footnote}{-1} Yixin Cao\footnotemark 
  \and
  \addtocounter{footnote}{-1} Xiating Ouyang\footnotemark 
  \and
  \addtocounter{footnote}{-2} Jianxin Wang\footnotemark 
 }
\begin{document}
\maketitle
\begin{abstract}
  The unit interval vertex deletion problem asks for a set of at most $k$ vertices whose deletion from an $n$-vertex graph makes it a unit interval graph.  We develop an $O(k^4)$-vertex kernel for the problem, significantly improving the $O(k^{53})$-vertex kernel of Fomin, Saurabh, and Villanger [ESA'12; SIAM J.\ Discrete Math 27(2013)]. We introduce a novel way of organizing cliques of a unit interval graph.  Our constructive proof for the correctness of our algorithm, using interval models, greatly simplifies the destructive proofs, based on forbidden induced subgraphs, for similar problems in literature.
\end{abstract}

\section{Introduction}\label{sec:intro}

A graph is a \emph{unit interval graph} if its vertices can be
assigned to unit-length intervals on the real line such that there is
an edge between two vertices if and only if their corresponding
intervals intersect.  Given a graph $G$ and an integer $k$, the
\emph{unit interval vertex deletion} problem asks whether there is a
set of at most $k$ vertices whose deletion makes $G$ a unit interval
graph.  According to Lewis and
Yannakakis~\cite{lewis-80-node-deletion-np}, this problem is
NP-complete.

This paper approaches this problem by studying its kernelization.
Given an instance $(G, k)$ of \uivd{}, a {\em kernelization algorithm}
produces in polynomial time an ``equivalent'' instance $(G', k')$ such
that $k' \leq k$ and the {\em kernel size} (i.e., the number of
vertices in $G'$) is upper bounded by some function of $k'$.  Fomin et
al.~\cite{fomin-13-kernel-pivd} presented an $O(k^{53})$-vertex kernel
for the problem, which we improve to the following, where $n$ and $m$
denote respectively the numbers of vertices and edges of the input
graph.
\begin{theorem}\label{thm:main}
  The \uivd{} problem has an $O(k^4)$-vertex kernel, and it can be
  produced in $O(n m + n^2)$ time.
\end{theorem}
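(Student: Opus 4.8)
The plan is a two-phase kernelization: first pin down a small modulator to a unit interval graph, then repeatedly shrink the unit interval part, establishing the correctness of each reduction by surgery on interval models rather than by the customary case analysis over forbidden induced subgraphs (claws, nets, suns, and holes).

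\emph{Phase 1: a linear modulator.} In polynomial time, and within the stated $O(nm+n^2)$ budget, I would either certify that $(G,k)$ is a no-instance or produce a set $M\subseteq V(G)$ with $|M|=O(k)$ such that $H:=G-M$ is a unit interval graph. This rests on a polynomial-time constant-factor approximation for \uivd{}; one concrete route, using the fact that a graph is a unit interval graph exactly when it is both an interval graph and claw-free, is to compose a constant-factor approximation for interval vertex deletion with a greedy packing of vertex-disjoint claws (four vertices each) inside the resulting interval graph. If the returned set has more than $c\,k$ vertices for the appropriate constant $c$, output a trivial no-instance; otherwise fix $M$ and $H$ and fix a clique path $K_1,\dots,K_t$ of $H$ (a straight enumeration, essentially unique, computable in linear time).

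\emph{Phase 2: shrinking $H$.} I would localize the interaction of $M$ with the clique path: for $v\in M$ let its \emph{span} be the shortest index interval $[a,b]$ with $N_G(v)\cap V(H)\subseteq K_a\cup\dots\cup K_b$, so the $O(k)$ span-endpoints cut $\{1,\dots,t\}$ into $O(k)$ consecutive \emph{segments}; a vertex of $H$ is \emph{private} to a segment if its range of clique-indices lies inside it, and a run of consecutive maximal cliques is \emph{clean} if none of its private vertices has a neighbour in $M$. Two families of rules then do the work. (i)~\textbf{Clean-window surgery}: a long clean run is replaced by a short one with the same interface to its two neighbours; on the interval model this is nothing but cutting out a sufficiently wide stretch of the line and regluing the ends, which is valid precisely because a unit-length interval is too short to straddle the cut, so every adjacency and non-adjacency among the survivors is preserved. (ii)~\textbf{Twin and crowding rules}: a class of pairwise true twins (equal closed neighbourhood, also towards $M$) is trimmed to $k+2$ vertices by an exchange argument, and the number of vertices of any one ``behaviour type'' — how far a vertex reaches left and right along the path, together with its neighbourhood in $M$ — that can sit next to a given maximal clique is capped at $\mathrm{poly}(k)$. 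After these rules are applied exhaustively, the number of surviving maximal cliques, the number of types next to each, and the multiplicity of each type are all $\mathrm{poly}(k)$, and they multiply out to $O(k^4)$ vertices; since every rule is local it can be discharged in a single sweep along the clique path within $O(nm+n^2)$ time.

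\emph{Correctness and the main obstacle.} The equivalence of each rule is proved constructively and symmetrically: from a solution of the reduced instance one takes an interval model of what survives, reinserts the deleted private intervals into their clean windows (stretching the line locally) or re-expands the trimmed twin/type classes, and reads off a solution of $(G,k)$ of the same size; the converse contracts models the same way. Because every manipulation stays inside a \emph{clean} part of the model, it cannot create a claw, net, sun, or hole — exactly the verification a forbidden-subgraph argument would have to carry out by hand. I expect the crux to be the crowding rule: a priori a maximal clique of $H$ could host $2^{\Theta(k)}$ vertices with pairwise distinct neighbourhoods in $M$, so bounding the number of coexisting types by a polynomial — the heart of the paper's new way of organizing the cliques of a unit interval graph — needs a structural argument showing that a surviving modulator vertex must see an ``interval-like'' portion of $H$ along the clique path. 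A secondary obstacle is fitting everything, the approximate modulator included, into $O(nm+n^2)$ time, which forces every rule to be applied without recomputing the clique path after a deletion.
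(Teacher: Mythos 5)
Your high-level plan matches the paper's: build an $O(k)$-size modulator $M$ by approximation, decompose the unit interval remainder $H=G-M$ into a linearly ordered sequence of cliques, shrink long modulator-free stretches by local surgery on the interval model, bound the number of ``relevant'' vertices near each clique, and prove correctness by reinserting intervals into a model of the reduced graph rather than by a case analysis over forbidden subgraphs. Two of your three rule families correspond roughly to the paper's: your clean-window surgery is the paper's Rule~3 (contracting a clique inside a run of $M$-nonadjacent cliques, with a separator picking which one), and your count of $O(k)$ spans cutting the path into $O(k)$ segments reflects the paper's Rules~1--2, which cap how many cliques a single modulator vertex can touch. So the skeleton is right.

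The gap is exactly where you flag it, and you have not closed it. You define a vertex's ``behaviour type'' to include \emph{its entire neighbourhood in $M$}; since $|M|=\Theta(k)$, there can be $2^{\Theta(k)}$ such types coexisting in one clique, and no twin rule or crowding cap on the \emph{multiplicity} of each type can bring that below polynomial. You hope for a structural argument that the number of realizable types is itself $\mathrm{poly}(k)$, but that is not how the paper escapes the blow-up. The paper never bounds the number of full $M$-neighbourhood patterns. Instead it observes that the correctness argument only ever needs to replace a deleted vertex inside a \emph{small} forbidden subgraph, which contains at most two or three modulator vertices. Accordingly it selects, from each clique $K_i$, the first and last $k+1$ vertices for each adjacency pattern with respect to a \emph{pair} $x_1,x_2\in M$ (four patterns per pair), plus a few refinements involving one modulator vertex and one nearby non-modulator vertex ($K_i^2,K_i^3,K_i^4$), plus $O(k)$ witnesses per nonadjacent \emph{triple} in $M$. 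That gives $O(|M|^2 k + |M| k^2)=O(k^3)$ vertices per clique, times $O(k^2)$ cliques, with a refinement (only $O(k)$ cliques have $\ge k+1$ neighbours of any single $v\in M$) bringing the total to $O(k^4)$. Without the pair/triple idea your scheme does not yield any polynomial bound.

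Two smaller mismatches are worth noting. First, the paper does \emph{not} use a clique path decomposition but a genuine vertex \emph{partition} into cliques $K_1,\dots,K_t$ with $N(K_i)\subseteq K_{i-1}\cup K_{i+1}$; this stronger locality (each vertex in exactly one clique, neighbours confined to the two adjacent cliques) is what makes the ``span'' and distance arguments clean, and the paper explicitly contrasts it with the clique path you propose, where a vertex can live in many cliques and long-range adjacencies exist. You could probably reformulate your rules on a clique path, but the bookkeeping gets noticeably heavier. Second, your modulator construction (interval-deletion approximation composed with a claw packing) is a reasonable alternative, but the paper uses the $6$-approximation of \cite{cao-15-unit-interval-editing} precisely because it can be recomputed incrementally in $O(m)$ time after each reduction; an off-the-shelf interval-deletion approximation is unlikely to fit the claimed $O(nm+n^2)$ budget, which the paper goes to some trouble to meet by not recomputing from scratch.
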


\tikzstyle{vertex}  = [{fill=cyan,circle,cyan,draw,inner sep=1pt}]
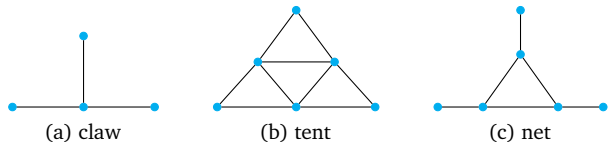
\begin{figure*}[h]
  \centering\footnotesize
  \subfloat[claw]{\label{fig:claw}
    \begin{tikzpicture}[scale=.27]
    \node [vertex] (a1) at (-3.5, 0) {};
    \node [vertex] (v) at (0, 0) {};
    \node [vertex] (b1) at (3.5, 0) {};
    \node [vertex] (c) at (0,3.5) {};
    \draw[] (a1) -- (v) -- (b1);
    \draw[] (v) -- (c);
    \end{tikzpicture}
  }
  \qquad
  \subfloat[tent]{\label{fig:tent}
    \begin{tikzpicture}[scale=.2]
    \node [vertex] (s) at (0,6.44) {};
    \node [vertex] (a) at (-5.25,0) {};
    \node [vertex] (a1) at (0, 0) {};
    \node [vertex] (b) at (5.25,0) {};
    \node [vertex] (c1) at (-2.55,3) {};
    \node [vertex] (c2) at (2.55,3) {};
    \draw[] (a) -- (a1) -- (b) -- (c2) -- (s) -- (c1) -- (a);
    \draw[] (c1) -- (c2) -- (a1) -- (c1);
    \end{tikzpicture}
  }
  \qquad
  \subfloat[net]{\label{fig:net}
    \begin{tikzpicture}[scale=.2]
    \node [vertex] (s) at (0,6.44) {};
    \node [vertex] (a) at (-5.5, 0) {};
    \node [vertex] (a1) at (-2.5, 0) {};
    \node [vertex] (b1) at (2.5,0) {};
    \node [vertex] (b) at (5.5,0) {};
    \node [vertex] (c) at (0,3.5) {};
    \draw[] (a) -- (a1) -- (b1) -- (b);
    \draw[] (c) -- (s);
    \draw[] (a1) -- (c) -- (b1);
    \end{tikzpicture}
  }
  \caption{Forbidden induced graphs.}
  \label{fig:fis}
\end{figure*}
The structures of unit interval graphs have been well studied and well
understood.  It is known that a graph is a unit interval graph if and
only if it contains no claw, net, tent, (as depicted in
Figure~\ref{fig:fis},) or any hole (i.e., an induced cycle on at least
four vertices) \cite{roberts-69-indifference-graphs,
  wegner-67-dissertation}.  One can decide in linear time whether a
given graph is a unit interval graph; if it is not, we can obtain a
forbidden induced subgraph in the same time
\cite{hell-04-certifying-proper-interval}.  The \uivd{} problem can
then be equivalently defined as finding a set of at most $k$ vertices
that hits all \fis{s} of the input graph.

The vertex deletion problem has been defined on many other graph
classes and has been intensively studied.  It is the easy case when
the objective graph class has a finite set of \fis{s}: The sunflower
lemma implies a polynomial kernel for the vertex deletion problem to
this graph class \cite{flum-grohe-06}.  However, kernels produced by
the sunflower lemma tend to be very large.  Furthermore, most
interesting graph classes have an infinite number of \fis{s}, hence
the hard case.  Another approach that works for both (and even edge
modification problems) is to start from a modulator, i.e., a set of vertices
whose deletion leaves a graph in the objective graph class.
With a modulator, we are allowed to use the properties of unit
interval graphs to analyze the rest of the graph.  What is important
is the interaction between other vertices with the modulator, and thus
through its analysis we can identify irrelevant vertices, thereby
producing the kernel \cite{drange-14-kernel-trivially-perfect,
  jansen-16-approximation-and-kernelization-chordal-deletion}.

Since holes of any length are forbidden in unit interval graphs, our
problem is clearly the hard case.  Hence both Fomin et
al.~\cite{fomin-13-kernel-pivd} and this paper use the modulator
approach.  For both of us, the modulator consists of two parts, first
a set of vertices that hits all \emph{small} \fis{s},---we use
different thresholds for bing small,---and the second an optimal
hitting set for long holes in the remaining graph.  Recall that long
holes behave very nicely in a graph free of small \fis{s}; for
example, a minimum hitting set for them can be found in linear time
\cite{cao-15-unit-interval-editing}.  Thus, our main concern is the
first part.  What differentiates these two algorithms is how they are
carried out.  Fomin et al.~\cite{fomin-13-kernel-pivd} used the
sunflower lemma to produce the modulator, while we use a
constant-approximation algorithm.

We only need to proceed when the approximation algorithm produces a
solution of $O(k)$ vertices.  Thus, our modulator has a linear size,
which is in a sharp contrast with the huge modulator of Fomin et
al.~\cite{fomin-13-kernel-pivd} produced by the sunflower lemma.  On
the other hand, their modulator has an extra property that is not
shared by ours.  It guarantees that one only needs to care about small
forbidden induced subgraphs \emph{inside the modulator}, thereby
saving them a lot of trouble in the selection of relevant vertices.
Our modulator nevertheless does not have this property.  Therefore,
the interaction of the modulator with the rest of the graph is far
more complicated in our case, and the analysis is fundamentally
different.  In particular, the main technical difficulties present
themselves exactly at the analysis of the small \fis{s}.

This is exactly where our main technical ideas appear, which result in
an algorithm and analysis significantly simpler than that of Fomin et
al.~\cite{fomin-13-kernel-pivd}.  Let ($G, k$) be the input instance
and let $M$ be the modulator.  Our first idea is to \emph{partition}
the vertices of the unit interval subgraph $G - M$ into cliques and
organize them in a linear way such that vertices in each clique are
adjacent to only vertices in its two neighboring cliques.  This is
quite different from the widely used clique path decomposition,
because in a clique path decomposition, (1) a vertex can appear in
more than one cliques; and (2) two vertices in cliques that are far
away can be adjacent.  Our reduction rules ensure that if ($G, k$) is
reduced, then there cannot be more than $O(k^2)$ cliques in the
partition of $G - M$.  From each of these cliques at most $O(k^3)$
vertices are relevant.  This implies a kernel of $O(k^5)$ vertices,
and a more careful analysis leads to the smaller size claimed in
Theorem~\ref{thm:main}.

Our second idea appears in the proof of the correctness of our
algorithm.  Our reduction rules are rather elementary and
self-explanatory.  The main step is to prove the irrelevance of the
other vertices.  We may assume that ($G, k$) is already reduced, and
let $G'$ denote the subgraph induced by the relevant vertices and the
modulator.  We need to show that if there is a solution $V_-$ of size
at most $k$ to our kernel $G'$, then it is a solution to $G$.  Instead
of showing the nonexistence of forbidden induced subgraph in $G -
V_-$, (which would necessarily lead to an endless list of cases,) we
build a unit interval model for $G - V_-$ out of a unit interval model
for $G' - V_-$.

In a companion paper \cite{cao-16-kernel-interval-vertex-deletion}, we
have also developed a polynomial kernel for the interval vertex
deletion problem, which is arguably more challenging and was only
recently shown to be fixed-parameter tractable
\cite{cao-15-interval-deletion}.  It extends the ideas from this
paper.  It also starts from a modulator produced by the
constant-approximation algorithm
\cite{cao-16-almost-interval-recognition}, but the analysis is far
more complicated.  In particular, the novel clique partition, which
plays a crucial role in simplifying the analysis, does not apply there
in any way we know of.
Jansen and
Pilipczuk~\cite{jansen-16-approximation-and-kernelization-chordal-deletion}
recently studied the kernelization of the chordal vertex deletion
problem and produced the first polynomial kernel.  They also used an
approximation solution as the modulator, for which they had to first
design a polynomial-time approximation algorithm.  Their kernel has
also a huge size, $O(k^{162})$ vertices.

Let us also mention the related parameterized algorithms (i.e.,
algorithms running in time $O(f(k)\cdot n^{O(1)})$ for some function
$f$ independent of $n$) for the problem, which have undergone a
sequence of improvements.  Recall that chordal graphs are those graphs
containing no holes, and thus unit interval graphs are a subclass of
chordal graphs.  Toward a parameterized algorithm with $f(k) =
\Omega(6^k)$, one can always dispose of all the claws, nets, and tents
from the input graph, and then call the algorithm for the chordal
vertex deletion problem \cite{marx-10-chordal-deletion,
  cao-16-chordal-editing} to break all holes in the remaining graph.
Direct algorithms for \uivd\ were later reported in
\cite{bevern-10-pivd, villanger-13-pivd,
  cao-15-unit-interval-editing}, and the current best algorithm runs
in time $O(6^k\cdot (n + m))$.  All the three direct algorithms use a
two-phase approach.  In \cite{cao-15-unit-interval-editing}, for
example, the first phase breaks all claws, nets, tents, and $C_4$'s,
while the second phase deals with the remaining \{claw, net, tent,
$C_4$\}-free graphs, on which the problem can be solved in polynomial
time.  A simple adaptation of this approach leads to an $O(n m +
n^2)$-time $6$-approximation algorithm.

\paragraph{Organization.} The rest of the paper is organized as
follows.  Section~\ref{sec:clique-partition} introduces the clique
partition and its properties.  Sections~\ref{sec:reductions} presents
three simple reduction rules.  Section~\ref{sec:selection} finishes
the kernel by handpicking vertices from the reduced graph.
Section~\ref{sec:remarks} closes this paper by discussing
implementation issues.

\section{The clique partition}\label{sec:clique-partition}

All graphs discussed in this paper are undirected and simple.  A graph
$G$ is given by its vertex set $V(G)$ and edge set $E(G)$, whose
cardinalities will be denoted by $n$ and $m$ respectively.  All input
graphs in this paper are assumed to be connected, hence $n = O(m)$
whenever $n > 1$.  For $\ell\ge 4$, we use $C_\ell$ to denote a hole
on $\ell$ vertices.  Chordal graphs are precisely $\{C_\ell:\ell\ge
4\}$-free graphs.

In this paper, all intervals are closed.  An \emph{interval graph} is
the intersection graph of a set of intervals on the real line.  The
set of intervals, called an {\em interval model}, can be specified by
their $2 n$ endpoints: The interval $I(v)$ for vertex $v$ is given by
$[\lp{v}, \rp{v}]$, where \lp{v} and \rp{v} are its \emph{left and
  right endpoints} respectively.  It always holds $\lp{v} < \rp{v}$.
No distinct intervals are allowed to share an endpoint in the same
model; note that this restriction does not sacrifice any generality.
A graph is a {\em unit interval graph} if it has a {\em unit interval
  model}, where every interval has length one.  An interval model is
\emph{proper} if no interval in it properly contains another interval.
It is easy to see that every unit interval model is proper; a
nontrivial observation of
Roberts~\cite{roberts-69-indifference-graphs} is that every graph
having a proper interval model also has a unit interval model.

\begin{figure*}[h]
  \centering\footnotesize
  \includegraphics{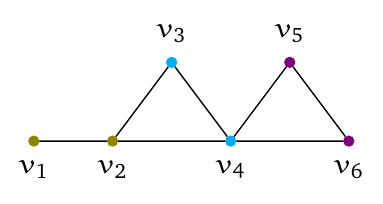}   
  \qquad
  \includegraphics{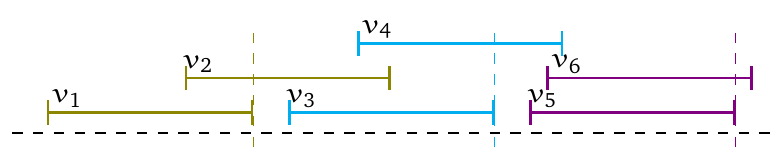}   
  \caption{A unit interval graph and its unit interval model.  The
    proper interval ordering decided by this model is $\langle v_1,
    v_2, v_3, v_4, v_5, v_6 \rangle$, which partitions the graph into
    cliques $\{v_1, v_2\}$, $\{v_3, v_4\}$, and $\{v_5, v_6\}$,
    corresponding to the three dashed vertical lines respectively.}
  \label{fig:clique-partition}
\end{figure*}

Note that in a proper interval model, if $\lp{u} <\lp{v}$, then
$\rp{u} <\rp{v}$ as well.  Therefore, it makes sense to talk about the
left-right relationship of the intervals.  
If we read the vertices by the ordering of their intervals, we end
with a \emph{proper interval ordering} of the graph
\cite{looges-93-greedy-algorithms-uig}.
The following property is an easy consequence of
the definition of proper interval models and proper interval ordering.
\begin{proposition}\label{lem:proper-interval-ordering}
  Let $v_1, \ldots, v_n$ be a proper interval ordering of a unit
  interval graph $G$.  For every $1\le i < j \le n$, if $v_i v_j\in
  E(G)$, then $\{v_i, \ldots, v_j\}$ is a clique.
\end{proposition}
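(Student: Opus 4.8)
The plan is to exploit the key monotonicity property of proper interval orderings: if $v_1,\ldots,v_n$ is a proper interval ordering induced by a proper (e.g.\ unit) interval model, then $\lp{v_1}<\lp{v_2}<\cdots<\lp{v_n}$ and, by the remark preceding the proposition, $\rp{v_1}<\rp{v_2}<\cdots<\rp{v_n}$ as well. Fix $i<j$ with $v_iv_j\in E(G)$; this means $I(v_i)\cap I(v_j)\neq\emptyset$. Since $\lp{v_i}<\lp{v_j}$ forces $\rp{v_i}<\rp{v_j}$, the only way these two intervals can intersect is if $\lp{v_j}\le\rp{v_i}$, i.e.\ the left endpoint of $I(v_j)$ lies inside $I(v_i)$. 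I would record this as the first step: $v_iv_j\in E(G)$ with $i<j$ implies $\lp{v_i}<\lp{v_j}\le\rp{v_i}<\rp{v_j}$.

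The second step is to show that every $v_\ell$ with $i<\ell<j$ is adjacent to both $v_i$ and $v_j$. By monotonicity of both endpoint sequences, $\lp{v_i}<\lp{v_\ell}<\lp{v_j}$ and $\rp{v_i}<\rp{v_\ell}<\rp{v_j}$. From Step 1 we have $\lp{v_j}\le\rp{v_i}$, so $\lp{v_\ell}<\lp{v_j}\le\rp{v_i}<\rp{v_\ell}$; the first and last of these inequalities give $\lp{v_\ell}<\rp{v_i}$ while $\lp{v_i}<\lp{v_\ell}$, hence $\lp{v_\ell}\in I(v_i)$ and so $I(v_i)\cap I(v_\ell)\neq\emptyset$, i.e.\ $v_iv_\ell\in E(G)$. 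Symmetrically, $\lp{v_\ell}<\lp{v_j}$ and $\lp{v_j}\le\rp{v_i}<\rp{v_\ell}$ give $\lp{v_j}<\rp{v_\ell}$, while $\lp{v_j}>\lp{v_i}$... more directly, $\lp{v_j}\le\rp{v_i}<\rp{v_\ell}$ and $\lp{v_j}>\lp{v_\ell}$ are not quite what I want; instead use $\lp{v_j}\le\rp{v_i}<\rp{v_\ell}$ together with $\rp{v_j}>\rp{v_\ell}>\lp{v_j}$ — actually the clean statement is that $I(v_\ell)$ contains the point $\lp{v_j}$ because $\lp{v_\ell}<\lp{v_j}$ (monotonicity) and $\rp{v_\ell}>\rp{v_i}\ge\lp{v_j}$ (Step 1 plus monotonicity), hence $v_jv_\ell\in E(G)$.

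The third step upgrades pairwise adjacency to a clique. Take any two indices $p<q$ with $i\le p<q\le j$. We have just shown $\lp{v_j}\le\rp{v_i}$, and by monotonicity $\lp{v_p}\le\lp{v_j}\le\rp{v_i}\le\rp{v_q}$ whenever... again the point $\lp{v_q}$: since $p<q$ we have $\lp{v_p}<\lp{v_q}$, and since $q\le j$ and $p\ge i$ we get $\lp{v_q}\le\lp{v_j}\le\rp{v_i}\le\rp{v_p}$, so $\lp{v_q}\in I(v_p)$ and $v_pv_q\in E(G)$. Thus every pair in $\{v_i,\ldots,v_j\}$ is adjacent, so it is a clique. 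I do not expect any serious obstacle here; the only thing to be careful about is keeping the strict-versus-nonstrict inequalities straight (endpoints may coincide across the single key inequality $\lp{v_j}\le\rp{v_i}$, but distinct intervals never share an endpoint so in fact $\lp{v_j}<\rp{v_i}$ strictly), and one should phrase everything in terms of "the left endpoint of the later interval lies inside the earlier interval," which is the one clean reformulation of edge-in-a-proper-model that makes all three steps immediate.
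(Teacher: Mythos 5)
Your proof is correct: you reduce everything to the single key inequality $\lp{v_j}\le\rp{v_i}$ and then sandwich $\lp{v_q}\le\lp{v_j}\le\rp{v_i}\le\rp{v_p}$ for any $i\le p<q\le j$, which is exactly the standard argument. The paper itself states this proposition without proof, calling it an easy consequence of the definitions, and your argument is precisely that consequence spelled out (your Steps~1 and~2 are really just warm-ups subsumed by the general Step~3 calculation).
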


Fixing a unit interval model $\cal I$ for a unit interval graph $G$,
we can greedily partition its vertices into a sequence of cliques.
Initially all vertices are unassigned.  We repetitively choose the
unassigned vertex $v$ with the leftmost interval, and take all
vertices whose intervals containing $\rp{v}$; this set is clearly a
clique.  We proceed until the graph becomes empty.  Let ${\cal K} =
\{K_1, \ldots, K_t\}$ be the set of cliques obtained in the order.
See Figure~\ref{fig:clique-partition} for an example.  One should be
noted that the cliques are not maximal in general; in particular the
last vertex of $K_{i - 1}$ might be adjacent to all vertices in
$K_{i}$, e.g., both the second the third cliques in
Figure~\ref{fig:clique-partition}.  The following proposition and its
corollary characterize this partition, and facilitate our analysis of
the kernel size.
\begin{proposition}\label{lem:partition}
  Let $K_1, \ldots, K_t$ be the clique partition of a unit interval
  model $G$.  For each $1< i < t$, it holds that $N(K_i)\subset
  K_{i-1}\cup K_{i+1}$.  Moreover, $N(K_1)\subseteq K_{2}$,
  $N(K_t)\subset K_{t - 1}$.
\end{proposition}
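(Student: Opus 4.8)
I would write $v^{(i)}$ for the \emph{generator} of $K_i$, the unassigned vertex with leftmost interval selected at the $i$th step, so that $K_i$ is precisely the set of vertices unassigned before step $i$ whose intervals contain $\rp{v^{(i)}}$; note $v^{(i)}\in K_i$, so each $K_i$ is nonempty (and the procedure terminates). Here $N(K_i)=\big(\bigcup_{v\in K_i}N(v)\big)\setminus K_i$. The plan is to read out of the procedure the single increasing chain of $2t$ reals
\[
  \lp{v^{(1)}}<\rp{v^{(1)}}<\lp{v^{(2)}}<\rp{v^{(2)}}<\cdots<\lp{v^{(t)}}<\rp{v^{(t)}},
\]
then to show that every interval of $K_i$ is trapped between $\lp{v^{(i)}}$ and $\rp{v^{(i+1)}}$, after which the disjointness of faraway intervals drops out.

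To get the chain, I would fix $i<t$ and argue: $v^{(i+1)}$ is unassigned at step $i$ (it is assigned only at step $i+1$), and $v^{(i)}$ is then the leftmost unassigned vertex, so $\lp{v^{(i)}}<\lp{v^{(i+1)}}$ (all $2n$ endpoints being distinct), hence $\rp{v^{(i)}}<\rp{v^{(i+1)}}$ since the model is proper; finally $v^{(i+1)}\notin K_i$ means $\rp{v^{(i)}}\notin I(v^{(i+1)})$, which together with the previous inequality forces $\rp{v^{(i)}}<\lp{v^{(i+1)}}$. Adding $\lp{v^{(i)}}<\rp{v^{(i)}}$ gives the displayed chain.

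Next, for $u\in K_i$: unassignedness at step $i$ gives $\lp{v^{(i)}}\le\lp u$, and $u\in K_i$ gives $\lp u\le\rp{v^{(i)}}\le\rp u$, so $\lp u<\rp{v^{(i)}}$ and, when $i<t$, the chain yields $\lp u<\lp{v^{(i+1)}}$ and hence $\rp u<\rp{v^{(i+1)}}$ by properness. Then for any $p<q$ with $q\ge p+2$ (so $p\le t-2$), $u\in K_p$, $w\in K_q$, I would chain $\rp u<\rp{v^{(p+1)}}\le\rp{v^{(q-1)}}<\lp{v^{(q)}}\le\lp w$, so $I(u)\cap I(w)=\emptyset$ and $uw\notin E(G)$. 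Hence no vertex of $K_i$ has a neighbor outside $K_{i-1}\cup K_i\cup K_{i+1}$; since the $K_j$ partition $V(G)$, this gives $N(K_i)\subseteq K_{i-1}\cup K_{i+1}$ (with the convention $K_0=K_{t+1}=\emptyset$), and in particular $N(K_1)\subseteq K_2$ and $N(K_t)\subseteq K_{t-1}$. For the proper inclusions, I would note that for $i>1$ the chain gives $\rp{v^{(i-1)}}<\lp{v^{(i)}}\le\lp u$ for every $u\in K_i$, so $v^{(i-1)}\in K_{i-1}$ has no neighbor in $K_i$; thus $N(K_i)\subsetneq K_{i-1}\cup K_{i+1}$ for $1<i<t$ and $N(K_t)\subsetneq K_{t-1}$.

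I do not expect a serious obstacle: the proof is careful bookkeeping of strict inequalities. The one delicate point is the last step in building the chain, where one must use not just that $v^{(i+1)}$ is unassigned but that it is genuinely absent from $K_i$ (i.e.\ $\rp{v^{(i)}}\notin I(v^{(i+1)})$), and then use properness to place $\rp{v^{(i)}}$ to the \emph{left} of $I(v^{(i+1)})$ rather than to the right. The convention that all $2n$ endpoints are distinct is exactly what keeps every inequality strict and the chain clean, and the boundary cliques $K_1,K_t$ are handled by the empty-clique convention.
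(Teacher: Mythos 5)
Your proof is correct and follows essentially the same approach as the paper: both rest on the observation that the clique partition forces each $K_i$ to occupy a contiguous band of intervals, with the generators' endpoints interleaving strictly. The paper reasons at the level of the proper interval ordering via Proposition~\ref{lem:proper-interval-ordering} and is therefore terser, while you make the same structure explicit as a chain of $2t$ endpoint inequalities; the bookkeeping is more detailed but the underlying idea is identical (and as a bonus your chain argument directly yields Corollary~\ref{lem:distance} as well).
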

\begin{proof}
  Let $v\in K_i$.  By the definition of clique partition, $v$ is
  nonadjacent to the first vertex of $K_{i - 1}$.  By
  Proposition~\ref{lem:proper-interval-ordering}, no neighbor of $v$
  comes before the first vertex of $K_{i - 1}$.  Thus, if a neighbor
  of $v$ is before $K_i$, it has to be in $K_{i - 1}$.  On the other
  hand, any neighbor of $v$ after $K_i$ is either the first vertex of
  $K_{i +1}$ or adjacent to it, hence in $K_{i + 1}$.  The two border
  cases follow similarly.
\end{proof}

\begin{corollary}\label{lem:distance}
  Let $K_1, \ldots, K_t$ be the clique partition of a unit interval
  model $G$.  For each pair of vertices $u \in K_i$ and $v\in K_j$
  with $1\le i \le j \le t$, the distance between $u$ and $v$ is at
  least $j - i$.
\end{corollary}

The following fact will be used in the correctness proof of our main
reduction rule.  Here by \emph{contracting} a clique $K_i$ ($1 < i <
t$), we mean the operations of deleting all vertices in $K_i$, and
adding all possible edges to connect its neighbors in $K_{i - 1}$ and
in $K_{i + 1}$.

\begin{figure*}[h]
  \centering\footnotesize
    \includegraphics{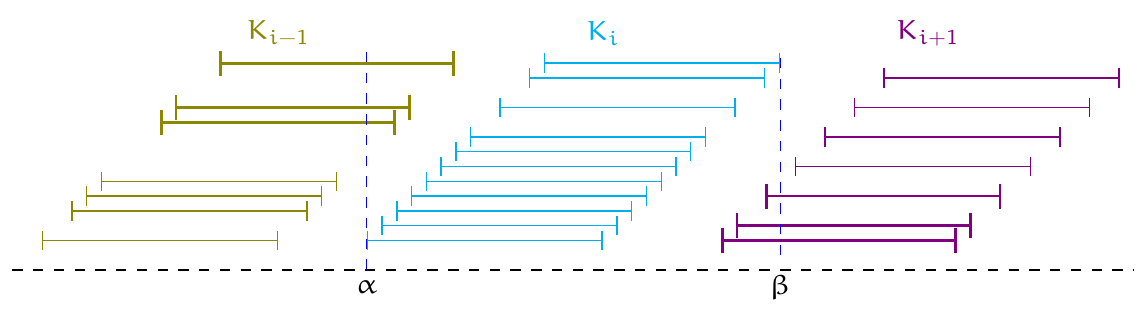}   

    (a) The original unit interval model for $G$.\\[3mm]

    \includegraphics{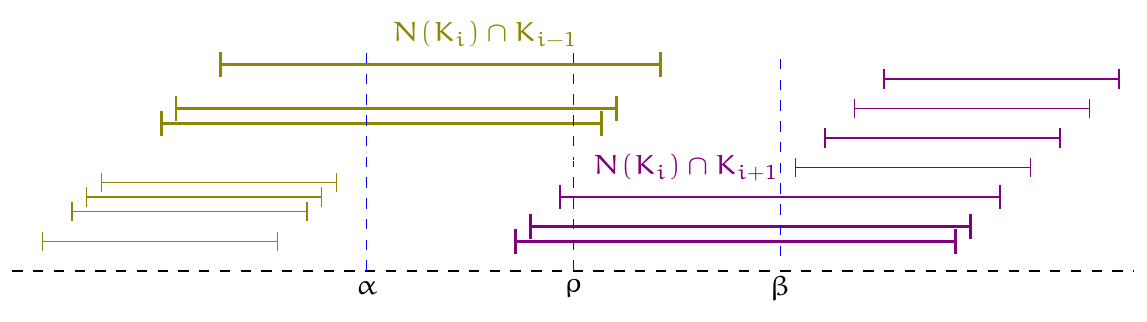}   

    (b) The new proper interval model for the graph obtained by
    contracting $K_i$.  \\Only intervals for vertices in $N(K_i)\cap
    K_{i - 1}$ and $N(K_i)\cap K_{i + 1}$, which are thick, are
    extended.
    \caption{Illustration for Proposition~\ref{lem:contraction}.}
  \label{fig:contraction}
\end{figure*}
\begin{proposition}\label{lem:contraction}
  Let $K_1, \ldots, K_t$ be the clique partition of a unit interval
  graph $G$.  For each $1 < i < t$, the graph obtained by contracting
  $K_i$ is still a unit interval graph.
\end{proposition}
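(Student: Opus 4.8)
The plan is to start from a unit interval model $\cal I$ for $G$ — equivalently, by Roberts' theorem, a proper interval model — that induces the given clique partition $K_1, \ldots, K_t$, and to modify it into a proper interval model for the contracted graph $G'$. Since $G'$ has the same vertex set as $G$ except that all of $K_i$ is removed, I would simply take the restriction of $\cal I$ to $V(G) \setminus K_i$ and then \emph{stretch} the intervals of the vertices in $N(K_i) \cap K_{i-1}$ to the right (and/or those in $N(K_i) \cap K_{i+1}$ to the left) so that they overlap, while being careful not to create any other new adjacencies. The picture in Figure~\ref{fig:contraction} is exactly this construction: only the ``thick'' intervals, those with a neighbor in $K_i$, get extended.

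The key structural facts I would invoke are Proposition~\ref{lem:partition} and Proposition~\ref{lem:proper-interval-ordering}. By Proposition~\ref{lem:partition}, every vertex of $K_i$ has all its neighbors inside $K_{i-1} \cup K_i \cup K_{i+1}$, so deleting $K_i$ only affects adjacencies between $K_{i-1}$ and $K_{i+1}$; all other adjacencies, and all non-adjacencies, are automatically preserved by restricting the model. Let $A = N(K_i) \cap K_{i-1}$ and $B = N(K_i) \cap K_{i+1}$; the contraction adds precisely the edges of the complete bipartite graph between $A$ and $B$ (those inside $A$ or inside $B$ already being present since $K_{i-1}, K_{i+1}$ are cliques). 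In the proper interval ordering, $A$ is a set of consecutive vertices ending at the last vertex of $K_{i-1}$, and $B$ is a set of consecutive vertices starting at the first vertex of $K_{i+1}$: indeed, if $u \in K_{i-1}$ is adjacent to some $w \in K_i$, then by Proposition~\ref{lem:proper-interval-ordering} every vertex between $u$ and $w$ — in particular every later vertex of $K_{i-1}$ — is adjacent to $w$, hence lies in $A$; symmetrically for $B$.

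Concretely, I would pick a real number $x$ strictly between $\rp{v}$ for the last vertex $v$ of $K_{i-1}$ and $\lp{w}$ for the first vertex $w$ of $K_{i+1}$ (there is a gap there because $K_i$ sat in between and, by the clique-partition construction, the first vertex of $K_{i+1}$ is non-adjacent to $v$), delete the intervals of $K_i$, and reset every interval of a vertex in $A$ to have right endpoint just past $x$, and every interval of a vertex in $B$ to have left endpoint just before $x$ — perturbing slightly so that no two endpoints coincide. After this, two vertices $u \in A$ and $w \in B$ have overlapping intervals (both straddle $x$), giving the new edges; a vertex of $A$ and a vertex of $(K_{i+1} \cup \cdots \cup K_t) \setminus B$ still do not overlap since the latter lies entirely to the right of $x$; and symmetrically on the other side. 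One must also check the extensions do not destroy properness: since the vertices of $A$ are the \emph{rightmost} vertices of $K_{i-1}$ and those of $B$ the \emph{leftmost} of $K_{i+1}$, stretching them toward $x$ keeps the left-to-right order of all left endpoints and of all right endpoints consistent, so no interval comes to properly contain another. Hence the new model is proper, and by Roberts' theorem $G'$ is a unit interval graph.

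I expect the main obstacle to be purely bookkeeping rather than conceptual: verifying that the stretching introduces \emph{exactly} the $A$–$B$ edges and nothing more, i.e., that no interval of $A$ reaches a vertex to the right of $B$ and no interval of $B$ reaches a vertex to the left of $A$, and that properness is maintained under the perturbation that separates coincident endpoints. All of this follows from the consecutiveness of $A$ and $B$ in the proper interval ordering together with Proposition~\ref{lem:partition}, so the argument is routine once the construction is set up carefully.
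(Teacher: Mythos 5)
Your proof is correct and takes essentially the same approach as the paper: delete the intervals of $K_i$ and extend the right endpoints of $N(K_i)\cap K_{i-1}$ and the left endpoints of $N(K_i)\cap K_{i+1}$ past a crossing point inside the freed-up region, preserving their relative order to keep the model proper, then invoke Roberts' theorem. The only cosmetic difference is where you place the crossing point (between the last vertex of $K_{i-1}$ and the first of $K_{i+1}$, rather than the paper's arbitrary $\rho$ in $[\alpha,\beta]$) and that you cluster the moved endpoints near $x$ rather than spreading them across a subinterval; both choices work.
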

\begin{proof}
  Let $G'$ be the graph obtained by contracting $K_i$; then $V(G') =
  V(G) \setminus K_i$ and $E(G') = E(G - K_i) \cup \big( (N(K_i)\cap
  K_{i - 1}) \times (N(K_i)\cap K_{i + 1}) \big)$.  We build a proper
  interval model for $G'$ as follows.  Let $\alpha$ be the left
  endpoint of the first vertex in $K_i$, and let $\beta$ be the right
  endpoint of the last vertex in $K_i$; note that $\bigcup_{v\in K_i}
  I(v) = [\alpha, \beta]$.  We choose an arbitrary point $\rho$
  between $\alpha$ and $\beta$.  For vertices in $N(K_i)\cap K_{i -
    1}$, we change their right endpoints to between $\rho$ and $\beta$
  while keeping their orders.  Likewise, for vertices in $N(K_i)\cap
  K_{i + 1}$, we change their left endpoints to between $\alpha$ and
  $\rho$ while keeping their orders.  See
  Figure~\ref{fig:contraction}.  Note that only intervals for
  $N(K_i)\cap K_{i - 1}$ and $N(K_i)\cap K_{i + 1}$ are extended, and
  all the extensions are made in $[\alpha, \beta]$, where is disjoint
  from all other intervals.  It is then easy to verify that the new
  interval model is proper and represents $G'$.
\end{proof}

In passing we should point out that neither the ordering nor the
clique partition is unique in general.\footnote{Even so, one can say that it
  is \emph{almost unique}, in the sense that there can be at most two
  partitions: true twins (vertices with the same closed neighborhood)
  always reside in the same clique, while on a true-twin-free graph,
  there is only one ordering up to full reversal
  \cite{deng-96-proper-interval-and-cag, hsu-95-recognition-cag}.}

\section{The reduction rules}\label{sec:reductions}

Let ($G, k$) be an instance of the \uivd{} problem.  We start by
calling the $6$-approximation
algorithm~\cite{cao-15-unit-interval-editing} to find an approximation
solution $M$ to $G$.  If $|M| > 6 k$, then we return a trivial
no-instance.  We may assume henceforth $|M| \le 6 k$, and fix a unit
interval model for $G - M$.  Let $\sigma = \langle v_1, v_2, \ldots,
v_{n - |M|} \rangle$ be the proper interval ordering and ${\cal K} =
\{K_1, \ldots, K_t\}$ the clique partition derived from this model.

As an easy consequence of Proposition~\ref{lem:partition}, any vertex
in $M$ that is adjacent to five or more cliques in $\cal K$ is the
center of some claw.  This observation inspires the following two
reduction rules, whose correctness is straightforward: If you do not
delete the vertex $v$ itself, then you have to delete at least $k + 1$
vertices to break all claws involving $v$.

\begin{reduction}\label{rule:star-1}
  If there exists a vertex $v\in M$  that is adjacent to at least
   $k + 5$ cliques in $\cal K$, then delete $v$ and decrease $k$ by $1$.
\end{reduction}

\begin{reduction}\label{rule:star-2}
  If there exist a vertex $v\in M$ and at least five cliques in $\cal
  K$ such that each of these cliques contains at least $k + 1$ neighbors of
  $v$, then delete $v$ and decrease $k$ by $1$.
\end{reduction}

Note that the diameter of a claw, net, or tent is at most three.  The
following is immediate from Corollary~\ref{lem:distance} and the fact
that any claw, net, or tent of $G$ needs to intersect $M$.
\begin{corollary}
  If there is $3\le \ell\le t - 2$ such that $M$ is nonadjacent to
  $K_i$ for $\ell - 2 \le i \le \ell + 2$, then no vertex in $K_\ell$
  is contained in a claw, net, or tent.
\end{corollary}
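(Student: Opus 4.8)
The plan is to argue by contradiction: suppose some vertex $v \in K_\ell$ lies in an induced subgraph $F$ of $G$ that is a claw, a net, or a tent. Two elementary observations drive the argument. First, $G - M$ is a unit interval graph and hence contains no induced claw, net, or tent; consequently $F$ is not a subgraph of $G - M$, so $V(F) \cap M \neq \emptyset$. Second, the diameter of each of the three obstructions is at most $3$ (as already noted before the statement), so any two vertices of $F$ are within distance $3$ in $F$.

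The core step is then to locate a short path from $v$ to $M$ that stays in $G - M$. Among the vertices of $V(F) \cap M$, pick one, say $w$, minimizing $d_F(v, w) =: d$. Since $v \in K_\ell \subseteq V(G) \setminus M$ while $w \in M$, we have $v \neq w$, so $1 \le d \le 3$. Fix a shortest $v$--$w$ path $P = (v = x_0, x_1, \dots, x_d = w)$ in $F$. By the minimality of $d$, none of $x_0, \dots, x_{d-1}$ lies in $M$; since the edges of $F$ are edges of $G$ and both endpoints of each of $x_0 x_1, \dots, x_{d-2} x_{d-1}$ lie outside $M$, this initial segment is a path of length $d - 1 \le 2$ lying entirely in $G - M$. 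Hence $x_{d-1}$ belongs to some clique $K_j$ of the partition of $G - M$, and $d_{G-M}(v, x_{d-1}) \le d - 1 \le 2$, so Corollary~\ref{lem:distance} gives $|j - \ell| \le 2$, i.e.\ $\ell - 2 \le j \le \ell + 2$. But $x_{d-1}$ is adjacent in $G$ to $w \in M$, so $M$ is adjacent to $K_j$ with $\ell-2\le j\le \ell+2$, contradicting the hypothesis. This contradiction finishes the proof.

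The genuinely routine ingredients are the two observations of the first paragraph. The one place that needs care — the main (mild) obstacle — is that Corollary~\ref{lem:distance} bounds distances only within $G - M$, whereas a path in $G$ may take shortcuts through $M$; this is exactly why we choose $w$ to be a \emph{closest} $M$-vertex of $F$ to $v$, which forces the relevant short path to avoid $M$ altogether, so that the corollary applies to it as stated.
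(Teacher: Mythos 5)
Your proposal is correct and takes essentially the same route the paper does (the paper simply declares the corollary ``immediate'' from Corollary~\ref{lem:distance}, the diameter-at-most-three observation, and the fact that any claw, net, or tent must intersect $M$); you have just filled in the routine details. The one place where one must be slightly careful---that Corollary~\ref{lem:distance} measures distances in $G-M$ only, so a shortcut through $M$ could a priori spoil the bound---you handle cleanly by selecting a \emph{closest} $M$-vertex of the obstruction, forcing the prefix of the shortest path to lie wholly in $G-M$.
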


Therefore, if there exists a long sequence of cliques that are
nonadjacent to $M$, then most vertices in the middle can only
participate in holes.  Such a hole, if it exists, necessarily visits
all these cliques, and in particular, it must enter from one end
clique and leave at the other end.  Moreover, it visits every clique
in between, and contains one or two vertices from each of them. If we
delete vertices from these cliques (for the purpose of breaking these
holes), then we would choose a minimum separator.  This observation
motivates the next reduction rule.  It has been observed in a more
general form in \cite[Reduction 2, Section 6]{cao-16-chordal-editing};
with the clique partition, we can simplify it to the following form.
Recall that any minimal separator of a unit interval graph is a
clique, which cannot intersect more than two cliques in $\cal K$.

\begin{reduction}\label{rule:smooth}
  Let $K_{i - 3},\ldots, K_{i + 3}$ be 7 consecutive cliques in $\cal
  K$ that are nonadjacent to $M$.  Let $u$ be the last vertex in $K_{i
    - 2}$ and let $v$ be the first vertex in $K_{i + 2}$.  We take a
  minimum \stsep{u}{v} $S$ in $G - M$, and let $\ell \in \{i - 1, i, i
  + 1\}$ be that $K_\ell$ is disjoint from $S$.  Contract $K_\ell$.
\end{reduction}

\begin{lemma}
  Rule~\ref{rule:smooth} is safe: ($G, k$) is a yes-instance if
  and only if ($G', k$) is a yes-instance, where $G'$ is the resulting
  graph.
\end{lemma}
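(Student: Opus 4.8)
The plan is to verify the two directions of the equivalence separately, using the contraction operation and Proposition~\ref{lem:contraction} for the forward direction, and the structural observations about $u$-$v$ separators for the backward direction.

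First I would handle the easier direction: if $(G,k)$ is a yes-instance, then so is $(G',k)$. Let $V_-$ be a solution of size at most $k$ to $G$. Since $G' $ is obtained from $G$ by contracting $K_\ell$ inside $G-M$, and $K_{i-3},\ldots,K_{i+3}$ are nonadjacent to $M$, the set $K_\ell$ is disjoint from $M$. I would set $V_-' = V_- \setminus K_\ell$ (so $|V_-'|\le k$) and argue $G' - V_-'$ is a unit interval graph. The key point is that $G' - V_-'$ is obtained from $(G - V_-)$-restricted-to-$V(G)\setminus K_\ell$ by possibly adding the edges of $(N(K_\ell)\cap K_{\ell-1})\times(N(K_\ell)\cap K_{\ell+1})$; since $G - V_-$ is a unit interval graph, has its own clique partition refining the ambient one on the $K$'s not touched by $M$ (here $K_\ell$ survives intact in $G-V_-$ because it is disjoint from $V_-'$, but parts of its neighbors may be deleted), I would invoke Proposition~\ref{lem:contraction} applied to the clique $K_\ell$ within a clique partition of $G - V_-$. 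Some care is needed because $V_-$ might delete all of $K_{\ell-1}$ or $K_{\ell+1}$ or split the ordering, but since those cliques are $M$-free the relevant local structure around $K_\ell$ is still a unit interval graph with $K_\ell$ an internal clique of its partition, so contracting it keeps it a unit interval graph, and the resulting graph is exactly $G'-V_-'$ (or a subgraph thereof, which still suffices).

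For the converse — if $(G',k)$ is a yes-instance, then so is $(G,k)$ — I would take a solution $V_-'$ of size at most $k$ to $G'$ and build a solution $V_-$ to $G$ of the same size. Here the natural attempt is to reuse $V_-'$ directly: because $K_\ell$ was contracted away, $V_-'$ lives in $V(G)\setminus K_\ell$. I would argue that $G - V_-'$ is already a unit interval graph. Since $G - K_\ell$ is a subgraph of $G'$ (it equals $G'$ minus the added contraction edges), $(G-K_\ell) - V_-'$ is an induced subgraph of $G' - V_-'$ hence unit interval; so it remains to insert $K_\ell$ back. The crucial structural facts are: $K_\ell$ is a clique; by Proposition~\ref{lem:partition} its only neighbors lie in $K_{\ell-1}\cup K_{\ell+1}$; and $S=N_{G-M}(K_{\ell-1}'\text{-side})\cdots$ — more precisely, the minimum $u$-$v$ separator $S$ chosen in Rule~\ref{rule:smooth} is disjoint from $K_\ell$, so in $G$ any $u$-$v$ path not through $K_\ell$ must meet $S$, which means every vertex of $K_\ell$ not separated off still has its $K_{\ell-1}$-neighbors and $K_{\ell+1}$-neighbors tied together through a clean clique structure. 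I would then construct a unit interval model for $G - V_-'$ by starting from a unit interval model of $G' - V_-'$ and "re-expanding" the point $\rho$ back into an interval $[\alpha,\beta]$ hosting the intervals for $K_\ell$, placing them to intersect exactly the surviving neighbors in $K_{\ell-1}$ and $K_{\ell+1}$ — this is literally the reverse of the construction in the proof of Proposition~\ref{lem:contraction}. The fact that $S$ is a \emph{minimum} separator guarantees that no essential connectivity was lost or spuriously created by the contraction, so the re-expansion is consistent.

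The main obstacle I expect is the backward direction, specifically proving that no forbidden induced subgraph is created when we re-insert $K_\ell$ into $G - V_-'$ — or, in the interval-model language the authors favor, that the re-expansion step genuinely yields a valid proper interval model rather than merely a plausible-looking picture. The subtlety is that $V_-'$ may have deleted vertices in $K_{\ell-1}$ and $K_{\ell+1}$ and even in the farther cliques $K_{\ell\pm 2}, K_{\ell\pm 3}$, so the ordering of $G'-V_-'$ around the contracted point $\rho$ could a priori differ from the ambient ordering; one must show that the interval of $\rho$'s "left neighbors" (former $N(K_\ell)\cap K_{\ell-1}$) and "right neighbors" (former $N(K_\ell)\cap K_{\ell+1}$) still sit consecutively and on the correct sides in any proper interval model of $G'-V_-'$, which is where the 7-clique buffer $K_{i-3},\ldots,K_{i+3}$ being $M$-free and the minimality of $S$ do the real work. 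I would isolate this as the core lemma and prove it by analyzing the proper interval ordering of $G'-V_-'$ localized to the neighborhood of the contracted cliques.
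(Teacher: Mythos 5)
Your plan is constructive in both directions (build a unit interval model for the target graph directly), whereas the paper reasons destructively: it first observes, via Proposition~\ref{lem:contraction} applied to $G-M$, that $G'-M$ is a unit interval graph, then uses the $7$-clique $M$-free buffer together with Corollary~\ref{lem:distance} to rule out any new claw, net or tent crossing the added edges, and finally handles the only remaining possibility---holes---by a short path-replacement argument (replace the added edge $xy$ by an $x$-$y$ path through $K_\ell$ in one direction, and contract the portion of the hole inside $K_\ell$ into the added edge in the other). It never needs to manipulate a model of $G-V_-$ or $G'-V_-$.

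Your forward direction has a concrete gap. You want to invoke Proposition~\ref{lem:contraction} with $K_\ell$ as a clique of the clique partition of $G-V_-$, but the clique partition of $G-V_-$ is \emph{not} a refinement of the ambient partition $\{K_1,\dots,K_t\}$: the greedy partition is computed by repeatedly taking the leftmost unassigned vertex and its forward neighborhood, so deleting vertices from $K_{\ell-1}$ or $K_{\ell-2}$ (even from $M$, which sits between the $K_j$'s in the interval model) can shift the greedy boundaries and split $K_\ell$ across two blocks of the new partition, or merge it with part of a neighbor. Worse, even if $K_\ell$ survives as a block, the contracted graph in Proposition~\ref{lem:contraction} adds only the edges $(N_{G-V_-}(K_\ell)\cap K_{\ell-1})\times(N_{G-V_-}(K_\ell)\cap K_{\ell+1})$, while $G'-V_-'$ contains all of $(N_G(K_\ell)\cap K_{\ell-1})\times(N_G(K_\ell)\cap K_{\ell+1})$ restricted to the surviving vertices. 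When $V_-\cap K_\ell\neq\emptyset$ these sets can differ, so $G'-V_-'$ can have strictly \emph{more} edges than the contraction of $K_\ell$ in $G-V_-$; ``$G'-V_-'$ is a subgraph thereof'' goes in the wrong direction, and a supergraph of a unit interval graph need not be one. Your backward direction flags the right obstruction (that the surviving parts of $K_{\ell-1}$ and $K_{\ell+1}$ must sit consecutively, and on the correct sides of the contraction point, in an arbitrary proper interval model of $G'-V_-'$) but does not resolve it; this is precisely what the re-expansion construction needs and what is not supplied. The paper avoids both problems by not constructing any model for the deleted graphs: after reducing to holes, a hole of $G'-V_-$ through an added edge $xy$ is converted to a hole of $G-V_-$ by inserting an $x$-$y$ path through $K_\ell$ (whose internal vertices are non-neighbors of the rest of the hole since all their neighbors lie in $K_{\ell-1}\cup K_{\ell}\cup K_{\ell+1}$), and a hole of $G-V_-$ through $K_\ell$ is converted to a hole of $G'-V_-$ by replacing its $K_\ell$-segment with the corresponding added edge, using that the original hole has length greater than $4$ because it must hit $M$.
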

\begin{proof}
  By {Proposition}~\ref{lem:contraction}, $G' - M$ is still a unit
  interval graph.  Thus, every forbidden induced subgraph in $G'$
  needs to intersect $M$.  Since we have only added edges between
  $K_{\ell - 1}$ and $K_{\ell + 1}$, for every vertex in them, its
  distance to $M$ is at least three.  There cannot be any claw, net,
  or tent in $G'$ involving vertices from both $K_{\ell - 1}$ and
  $K_{\ell + 1}$.  Therefore, we only need to take care of holes in
  the proof.

  Suppose to the contrary of the only if direction that ($G, k$) is a
  yes-instance but ($G', k$) is not.  Let $V_-$ be a solution to ($G,
  k$).  Then there is necessarily a hole of $G'$ that visits at least
  one edge added by the reduction; let it be $x y$ with $x\in K_{\ell
    - 1}$ and $y\in K_{\ell + 1}$.  Since $x,y$ are nonadjacent to
  $M$, their other neighbors on the hole must both belong to
  $V(G)\setminus M$ as well; denote them by $x'$ and $y'$
  respectively.  The ordering of these four vertices has to be $x'
  <_\sigma x <_\sigma y <_\sigma y'$.  There must be an \stpath{x}{y}
  in $G - M$ using only vertices in $K_{\ell - 1}, K_{\ell}, K_{\ell +
    1}$.  Its inner vertices are not adjacent to any vertex in this
  hole, except $x, y$ themselves.  Thus, we end with a hole of $G -
  V_-$, a contradiction.

  On the other hand, if $V_-$ is a solution to $G'$ but there is hole
  in $G - V_-$, then there must be a hole visiting a vertex deleted by
  the reduction.  This hole necessarily visit $N(K_\ell)\cap K_{\ell -
    1}$ and $N(K_\ell)\cap K_{\ell + 1}$.  But then after the
  reduction, its remaining vertices form a hole of $G' - V_-$: Note
  that the original hole has to visit $M$ and hence has length larger
  than $4$.
\end{proof}

\begin{lemma}
  Each of the three reduction rules can be applied in $O(m )$ time.
\end{lemma}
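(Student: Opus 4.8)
The plan is to analyze each of the three reduction rules separately and show that the condition triggering it can be tested, and the resulting operation performed, in $O(m)$ time. The common preprocessing is to compute a unit interval model of $G-M$ (hence the proper interval ordering $\sigma$ and the clique partition $\cal K$) once in linear time, using the certifying recognition algorithm cited in the introduction; after that we have, for each $v_j$, its clique index, and for each vertex of $M$ we can list its neighbors in $G-M$ by scanning the adjacency lists.

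For Rule~\ref{rule:star-1}, first I would bucket the neighbors of each $v\in M$ in $G-M$ by clique index; this is a single pass over the edges incident to $M$, so $O(m)$ total. For each $v$ the number of distinct clique indices appearing is exactly the number of cliques $v$ is adjacent to, which can be counted while bucketing. If some count reaches $k+5$ we delete $v$ and decrement $k$; deletion is just removing $v$ from the adjacency structure, again $O(m)$. Rule~\ref{rule:star-2} reuses the very same buckets: for each $v\in M$ we count how many of its clique-buckets have size at least $k+1$, and trigger if this reaches five. Both tests therefore piggyback on one $O(m)$ scan.

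For Rule~\ref{rule:smooth} the first task is to locate seven consecutive cliques $K_{i-3},\dots,K_{i+3}$ all nonadjacent to $M$. Marking, for each clique, whether any vertex of $M$ is adjacent to it is again obtained from the bucketing pass; then a linear scan over $K_1,\dots,K_t$ finds a window of seven consecutive unmarked cliques (or reports none). Given such a window, $u$ (last vertex of $K_{i-2}$) and $v$ (first vertex of $K_{i+2}$) are read off in $O(1)$, and a minimum $u$-$v$ separator $S$ in $G-M$ must be computed. Here I would exploit the structure already developed: by Proposition~\ref{lem:partition} any $u$-$v$ path must traverse $K_{i-1},K_i,K_{i+1}$ in order, and a minimal separator of a unit interval graph is a clique meeting at most two of the $K_j$'s, so a minimum $u$-$v$ separator is essentially the smallest among the candidate ``cuts'' sitting between consecutive cliques in this short stretch. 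Concretely, one can compute, for each of $j\in\{i-1,i,i+1\}$, the set of vertices lying strictly between $u$ and $v$ in $\sigma$ that are incident with the window, restrict attention to the three small cliques, and take the minimum-weight cut; since only $O(1)$ cliques are involved and each has at most $n$ vertices, with adjacency lists of total size $O(m)$ this is $O(m)$. Once $S$ is known, we pick $\ell\in\{i-1,i,i+1\}$ with $K_\ell\cap S=\varnothing$ (at least one exists because $S$ is a clique meeting at most two cliques) and contract $K_\ell$: delete its vertices and add the complete bipartite edge set between $N(K_\ell)\cap K_{\ell-1}$ and $N(K_\ell)\cap K_{\ell+1}$. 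The only subtlety is that this may add $\Theta(n^2)$ edges in the worst case; but since the new graph is again a unit interval graph (Proposition~\ref{lem:contraction}) with $O(n)$ edges, one simply does not materialize the extra edges naively—one records the contraction and works on the updated unit interval model, so the update is $O(n)=O(m)$.

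The main obstacle I anticipate is making the ``minimum $u$-$v$ separator'' computation genuinely $O(m)$ rather than invoking a general max-flow routine; the point I would stress is that the whole question lives inside three consecutive cliques of the partition, so by Proposition~\ref{lem:partition} and the clique property of minimal separators in unit interval graphs the separator can be read off directly from the local structure of $\sigma$, avoiding any flow computation. A secondary bookkeeping point is that after any application the model, ordering, and partition must be consistent for subsequent rounds, but since each rule either deletes one vertex or contracts one clique, repairing them is a local $O(m)$ operation as well.
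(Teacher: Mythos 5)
Your proposal is correct and follows the same overall strategy as the paper: mark $M$, do a single $O(m)$ scan of adjacency lists in the proper interval order, and record for each $v\in M$ how many cliques it sees and how many of those have $\ge k+1$ neighbors of $v$, plus for each clique whether $M$ touches it. The paper's published proof stops there with a one-line assertion that, with this information, the applicable rule can be ``applied in the same time.''

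Where you go beyond the paper is precisely the two places where that assertion needs justification for Rule~\ref{rule:smooth}, and both of your fixes are the right ones. First, you observe that the minimum \stsep{u}{v}{} lives entirely inside the three cliques $K_{i-1},K_i,K_{i+1}$ (because any \stpath{u}{v}{} must cross them in order, by Proposition~\ref{lem:partition}, and any minimal separator in a unit interval graph is a clique meeting at most two consecutive $K_j$'s), so it can be read off from the local structure of $\sigma$ without a flow computation. Second, you point out that the contraction can formally add $\Theta(n^2)$ edges, so one should record it on the interval model rather than materialize the new adjacency lists; this matches the paper's own footnote in Section~\ref{sec:remarks}, which notes that the effect of Rule~\ref{rule:smooth} on the model and clique partition is local and easily repaired. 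In short, same approach as the paper, but you make explicit the details the paper leaves to the reader.
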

\begin{proof}
  We can mark first the vertices in $M$, and then go through the
  adjacency list of each vertex in $G - M$ in the proper interval
  order.  During this process we can record (1) for each vertex $v\in
  M$, how many cliques in $\cal K$ are adjacent to $v$, and how many
  of them contain $k + 1$ or more neighbors of $v$; and (2) for each
  clique $K\in \cal K$, whether it is adjacent to $M$.  The process
  checks the adjacency list of each vertex once, and thus it takes
  $O(m)$ time in total.  With this information, we can decide which of
  the three reduction rules is applicable, and if yes, apply it in the
  same time.
\end{proof}

After the application of the reduction rules, it is possible that the
rest of $M$ is no longer a $6$-approximation of the reduced graph.
Therefore, we need to re-calculate the modulator.  This would
nevertheless take $O(n^2 m)$ time.  We defer the detailed for an
efficient implementation to Section~\ref{sec:remarks}.

\section{The kernel}\label{sec:selection}

Let ($G, k$) be a \emph{reduced instance} with respect to modulator
$M$, i.e., none of Rules~\ref{rule:star-1}--\ref{rule:smooth} can be
applied to $G$.  Recall that $\sigma$ is the fixed proper interval
ordering of $G - M$, and ${\cal K} = K_1, \ldots, K_t$ is the clique
partition of $G - M$.
We now pick up vertices from $G$ to make the kernel.  The idea is to
pick as few as possible vertices that are relevant, i.e., from
\emph{each type of vertices} (to be defined later) we choose $k + 1$,
which ensures that if any vertex from this type is not picked, then at
least one picked vertex is not deleted by a solution of size at most
$k$.  Note that it is possible that there are less than $k + 1$
vertices in some type, and then we pick all of them.
See Figure~\ref{fig:k0} for an example.

\begin{figure*}[h]
  \centering\footnotesize
    \includegraphics{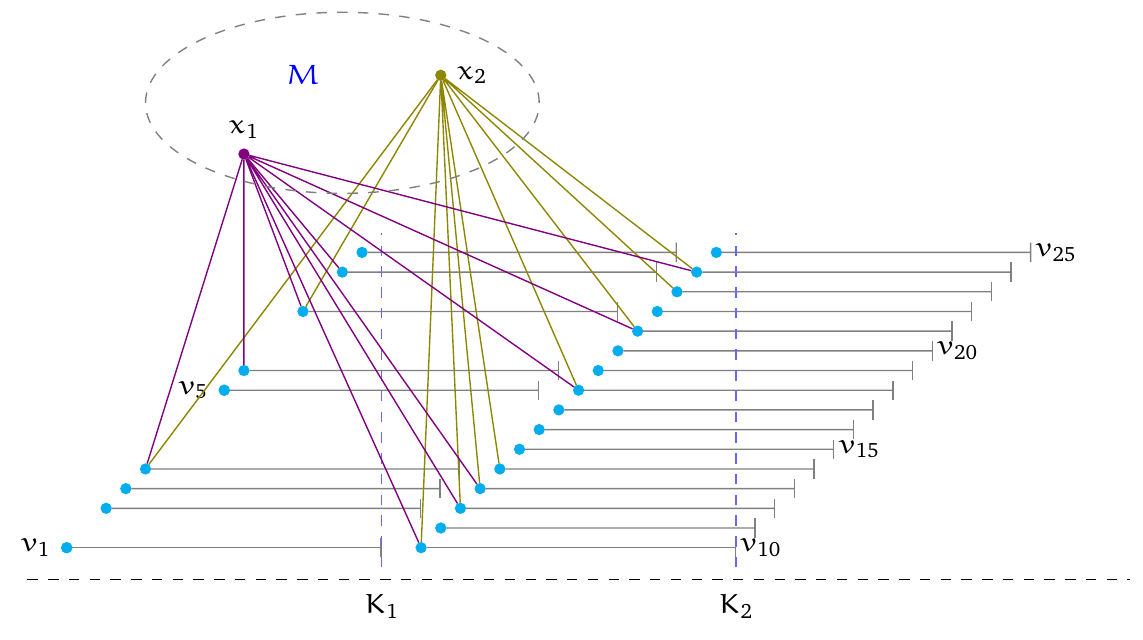}   
    \caption{Illustration for picking vertices.  The modulator $M$
      consists of $x_1$ and $x_2$, and other $25$ vertices are in
      $V(G)\setminus M$.  Edges in $G - M$ are not drawn: The
      intervals, whose left endpoints coincide the vertices they
      represent, make a unit interval model for $G - M$.  The vertices
      in $V(G)\setminus M$ are thus partitioned into two cliques,
      namely, $K_1 = \{v_1, \ldots, v_{9}\}$ and $K_2 = \{v_{10},
      \ldots, v_{25}\}$.  For $k = 2$, we have
      \\[2mm]
      $K_2^1({x_1, x_2}) = \{v_{10}, v_{12}, v_{13}\}\cup \{v_{18},
      v_{21}, v_{24}\}$; $K_2^1({x_1, \overline x_2}) = \emptyset$;
      $K_2^1({ \overline x_1, x_2}) = \{v_{14}, v_{23}\}$;
      $K_2^1({\overline x_1, \overline x_2}) = \{v_{11}, v_{15},
      v_{16}\}\cup \{v_{20}, v_{22}, v_{25}\}$.
      \\[2mm]
      $K_2^2(x_2, v_{6}) = \{v_{12}, v_{13}, v_{14}\}$; $K_2^2(x_2,
      v_{8}) = \{v_{14}, v_{18}, v_{21}\}$; $K_2^2(x_2, v_{9}) =
      \{v_{18}, v_{21}, v_{23}\}$; $K_1^2(x_2, v_{11}) = \{v_{3},
      v_{7}\}$; $K_1^2(x_2, v_{15}) = K_1^2(x_2, v_{16}) = \{v_{7}\}$.
      \\[2mm]
      {$K^3_2(x_1, v_4) = \{v_{13}, v_{18}, v_{21}\}$; $K^3_2(x_1,
        v_6) = \{v_{18}, v_{21}, v_{24}\}$; $K^3_2(x_1, v_7) =
        \{v_{21}, v_{24}\}$; $K^3_1(x_1, v_{24}) = \{v_{6}, v_{7},
        v_{8}\}$; $K^3_1(x_1, v_{21}) = \{v_{4}, v_{6}, v_{7}\}$;
        $K^3_1(x_1, v_{18}) = \{v_{4}, v_{6}\}$.}
      \\[2mm]
      $K^4_2(x_1, v_6) = \{v_{15}, v_{16}, v_{17}\}$; $K^4_2(x_1, v_7)
      = \{v_{17}, v_{19}, v_{20}\}$; $K^4_2(x_1, v_8) = \{v_{19},
      v_{20}, v_{22}\}$. }
  \label{fig:k0}
\end{figure*}

First, for each pair of vertices $x_1, x_2$ in $M$ and each $i = 1,
\ldots, t$, we consider the (non)neighbors of $x_1, x_2$ in $K_i$.  We
pick the first and last $k + 1$ vertices from $K_i$ for each of the
four patterns---adjacent to both; adjacent to only $x_1$; adjacent to
only $x_2$; and adjacent to neither.  Let them be denoted by
$K_i^1({x_1, x_2})$, $K_i^1({x_1, \overline x_2})$, $K_i^1({ \overline
  x_1, x_2})$, and $K_i^1({\overline x_1, \overline x_2})$
respectively.
{Also, let $K_i^1({x})$ denote the first $k + 1$ and the last $k + 1$
  of $\bigcup_{y\in M\setminus \{x\}} \big( K_i^1({x, y}) \cup
  K_i^1({x, \overline y}) \big)$.}

Second, For each $x\in M$, each $i = 2, \ldots, t$, and each of the
last $k + 1$ non-neighbors $y$ of $x$ in $K_{i - 1}$, we pick the last $k
+ 1$ common neighbors of $x$ and $y$ in $K_i$; for each $x\in M$, each
$i = 1, \ldots, t - 1$, and each of the first $k + 1$ non-neighbors
$y$ of $x$ in $K_{i + 1}$, we pick the first $k + 1$ common neighbors of
$x$ and $y$ in $K_i$.  Let them be denoted by $K_i^2(x, y)$.

Third, for each $x\in M$, each $i = 2, \ldots, t$, and each of the
first $k + 1$ neighbors $y$ of $x$ in $K_{i - 1}$, we pick the first
$k + 1$ vertices in $K_i$ that are neighbors of $x$ but not $y$; for
each $x\in M$, each $i = 1, \ldots, t - 1$, and each of the last $k +
1$ neighbors $y$ of $x$ in $K_{i + 1}$, we pick the last $k + 1$
vertices in $K_i$ that are neighbors of $x$ but not $y$.
Let them be denoted by $K_i^3(x, y)$.

Fourth, for each $x\in M$, each $i = 2, \ldots, t$, and each of the
last $k + 1$ neighbors $y$ of $x$ in $K_{i - 1}$, we pick the last $k +
1$ vertices in $K_i$ that are neighbors of $y$ but not $x$; for
each $x\in M$, each $i = 1, \ldots, t - 1$, and each of the first $k +
1$ neighbors $y$ of $x$ in $K_{i + 1}$, we pick the first $k + 1$
vertices in $K_i$ that are neighbors of $y$ but not $x$.
Let them be denoted by $K_i^4(x, y)$.

Finally, for each three pairwise nonadjacent vertices in $M$, we
arbitrarily pick $k + 1$ common neighbors of them in $V(G)\setminus
M$; and for each triple of vertices in $M$ that induces a $P_3$, we
arbitrarily pick $k + 1$ vertices in $V(G)\setminus M$ that are
adjacent to only the center vertex among them.  Let them be denoted by
$V_0$.

Let $K$ be a clique in $\cal K$.  If $|K| \le 2 k + 2$, then all its
vertices have been picked.  We consider then the nontrivial case,
i.e., when $|K| > 2 k + 2$.  The first and last $k + 1$ vertices of
$K$ are always picked; hence at least $2 k + 2$ vertices are picked
from $K$.  Likewise, the first and the last $k + 1$ vertices in $K$
that are nonadjacent to $M$ are always picked; so are the first and
the last $k + 1$ neighbors in $K$ for each $x\in M$.  Moreover, if a
vertex $v$ satisfies the conditions of any particular set but is not
picked, then we have picked from the set $2(k + 1)$ vertices, of which
$k + 1$ are to the left of $v$, and $k + 1$ are to the right of $v$.

Let $G'$ be induced by the picked vertices together with $M$.  We now
calculate the cardinality of $V(G')$.  There are $O{|M| \choose 2} *
O(k) + O(|M|) * O(k) * O(k) = O(k^3)$ vertices picked from each
clique.  On the other hand, the number $t$ of cliques in $\cal K$ is
$O(k^2)$, as otherwise one of Rules~\ref{rule:star-1}
and~\ref{rule:smooth} must be applicable.  Together with at most ${|M|
  \choose 3} * (k + 1) * 2 = O(k^4)$ vertices in $V_0$, and $O(k)$
vertices in $M$, a rough estimation of $|V(G')|$ would be $O(k^5)$.  A
refined analysis would bring it to $O(k^4)$.
\begin{lemma}
  The new graph $G'$ has at most $O(k^4)$ vertices.
\end{lemma}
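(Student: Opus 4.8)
The plan is to count, type by type, the vertices picked into $G'$, and then to observe that the naive bound $O(k^5)$ can be tightened to $O(k^4)$ by charging the per-clique contribution to global budgets rather than summing it over all $t = O(k^2)$ cliques. First I would dispose of the easy terms: $|M| = O(k)$ contributes $O(k)$ vertices, and $V_0$ contributes at most $2\binom{|M|}{3}(k+1) = O(k^4)$ vertices. It remains to bound the vertices picked from the cliques $K_1,\dots,K_t$.

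Next I would fix a clique $K_i$ and count what is picked from it. The type-$1$ sets $K_i^1(\cdot)$ contribute $4\binom{|M|}{2}\cdot 2(k+1) = O(k^2)\cdot O(k) = O(k^3)$, plus the $|M|\cdot 2(k+1) = O(k^2)$ vertices from the sets $K_i^1(x)$, plus the first and last $k+1$ vertices of $K_i$ and of its $M$-nonneighbors, all $O(k)$. For types $2$, $3$, $4$: for a fixed $x\in M$, each picks, for each of $O(k)$ choices of the witness $y$ (in $K_{i-1}$ or $K_{i+1}$), at most $2(k+1)$ vertices of $K_i$; summing over $x\in M$ gives $O(k)\cdot O(k)\cdot O(k) = O(k^3)$ per clique. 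So $O(k^3)$ vertices are picked from each clique, and since $t = O(k^2)$ (otherwise Rule~\ref{rule:star-1} or Rule~\ref{rule:smooth} applies, as argued before the lemma), the crude total is $O(k^5)$.

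The main obstacle — and the only nontrivial part — is the refinement to $O(k^4)$. The key is that types $2$, $3$, and $4$ are not really ``$O(k^3)$ per clique for all $t$ cliques independently'': each such picked set $K_i^j(x,y)$ is indexed by a vertex $x\in M$ and a witness $y$ that lies in a clique \emph{neighboring} $K_i$. I would argue that, for a fixed $x\in M$, the witnesses $y$ used across all cliques are confined: either $x$ is adjacent to at most $k+4$ cliques (Rule~\ref{rule:star-1}), so only $O(k)$ cliques $K_i$ have an $M$-neighbor $x$ available as a witness and hence can generate type-$3$ or type-$4$ sets for $x$; or, for type $2$, a witness must be a \emph{non}-neighbor of $x$ adjacent to a neighbor of $x$, which again pins $y$ to one of the $O(k)$ cliques bordering the $\le k+4$ cliques adjacent to $x$. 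Either way, for each $x\in M$ only $O(k)$ cliques participate, each contributing $O(k)$ witnesses and $O(k)$ picked vertices, giving $O(k^3)$ over all cliques for this $x$, hence $O(k)\cdot O(k^3) = O(k^4)$ total for types $2$–$4$. For type $1$, the first-and-last-$(k+1)$ selection against each of the $O(k^2)$ pairs costs $O(k^3)$ per clique but only the $O(k)$ sets $K_i^1(x)$ plus $O(k)$ boundary vertices are actually needed for the correctness argument; I would restructure the accounting so that the full $4\binom{|M|}{2}$-pattern selection is bounded by noting a vertex survives in $K_i^1$ only if it is ``$M$-extremal'' in its clique, of which there are $O(k^2)$ per clique but $O(k^2)\cdot t = O(k^4)$ overall — matching the bound. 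Combining all contributions, $|V(G')| = O(k^4)$. I expect the delicate point to be making the ``witnesses are confined to $O(k)$ cliques per $x$'' claim fully rigorous using precisely Rules~\ref{rule:star-1}–\ref{rule:smooth}, rather than the hand-waving above.
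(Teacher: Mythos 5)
Your accounting for types 2--4 is correct and is in fact a genuinely different route than the paper's: you observe that for a fixed $x\in M$, all witnesses and all picked vertices of $K_i^2(x,\cdot), K_i^3(x,\cdot), K_i^4(x,\cdot)$ live in the $O(k)$ cliques that are adjacent to (or immediately flank a clique adjacent to) $x$, which is a direct consequence of Rule~\ref{rule:star-1}. That gives $O(k)\cdot O(k)\cdot O(k)=O(k^3)$ per $x$ and $O(k^4)$ overall, with no further machinery. The paper instead splits the picked vertices into $N[M]$ and $V(G)\setminus N[M]$ and bounds the $N(M)$ side globally.

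The gap is your treatment of the type-1 sets, and it is not where you anticipated. Your ``$M$-extremal, hence $O(k^2)$ per clique'' claim is neither justified nor true: $K_i^1(x_1,x_2)$, $K_i^1(x_1,\overline{x_2})$, etc.\ range over all $\binom{|M|}{2}$ pairs and all four patterns, and when a clique contains many $N(M)$-vertices with diverse adjacency patterns towards $M$, the first/last-$(k+1)$ selections for different pairs can be essentially disjoint, giving $\Theta(k^3)$ distinct picks from one clique. Being ``first $k+1$ adjacent to both $x_1$ and $x_2$'' does not put a vertex among the first $k+1$ adjacent to $x_1$, nor among the first $k+1$ adjacent to $x_2$, individually. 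Rule~\ref{rule:star-1} alone only tells you $O(k^2)$ cliques meet $N(M)$, so you still end with $O(k^2)\cdot O(k^3)=O(k^5)$ for the $N(M)$-part of category 1. The paper closes exactly this hole with Rule~\ref{rule:star-2}, which you never use: a clique is \emph{heavy} if it has at least $k+1$ neighbors of some $x\in M$; Rule~\ref{rule:star-2} bounds the number of heavy cliques by $4|M|=O(k)$, from which $O(k^3)$ picks each is affordable; every other (light) clique contains at most $|M|\cdot k=O(k^2)$ vertices of $N(M)$ in total, so no matter how the type-1 selection behaves, at most $O(k^2)$ of its picks land in $N(M)$, and summing over $O(k^2)$ cliques gives $O(k^4)$. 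Your self-identified ``delicate point'' (rigorizing the witness confinement) is actually fine; the real missing ingredient is Rule~\ref{rule:star-2} and the heavy/light dichotomy for category 1.
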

\begin{proof}
  Since Rule~\ref{rule:star-1} is not applicable, for each $v\in M$
  there are at most $k + 5$ cliques intersecting $N(v)$.  There are at
  most $|M| \times (k + 5) = O(k^2)$ cliques adjacent to $M$.  On the
  other hand, since Rule~\ref{rule:smooth} is not applicable, at most
  $6$ consecutive cliques can be nonadjacent to $M$.  Therefore, the
  number $t$ of cliques in $\cal K$ is at most $O(k^2)$.

  We consider first the vertices that are not in $N[M]$.  In the first
  category, we choose from each clique at most $2 k + 2$ vertices that
  are nonadjacent to $M$.  In the third and fourth categories, we
  choose from each clique at most $|M|\times (k + 1) * 4 = O(k^2)$
  vertices that are nonadjacent to $M$.  Therefore, $|V(G')\setminus
  N[M]| = O(k^2) * O(k^2) = O(k^4)$.

  Consider then vertices in $N(M)$.
  Since Rule~\ref{rule:star-2} is not applicable, for each $v\in M$
  there can be at most four cliques containing $k + 1$ or more
  vertices from $N(v)$.  There are thus at most $|M| \times 4 \le 24
  k$ such cliques.  From each of them we picked $O{|M| \choose 2} *
  O(k) + O(|M|) * O(k) * O(k) = O(k^3)$ vertices, and hence the total
  number of vertices picked from these cliques is $O(k^4)$.  Each of
  the other cliques contains at most $k$ neighbor of each vertex $v\in
  M$, and no more than $|M| \times k = O(k^2)$ vertices from $N(M)$.
  Therefore from these cliques we picked at most $O(k^2) \times O(k^2)
  = O(k^4)$ vertices that are neighbors of $M$.

  In summary, 
  \[
  |V(G')| = |V(G')\setminus N[M]| + |V(G')\cap N(M)| + |M| = O(k^4) +
  O(k^4) + O(k) = O(k^4).
  \]
  The proof is now complete.
\end{proof}

To conclude Theorem~\ref{thm:main}, it remains to verify the
equivalence between the new instance ($G', k$) and the original
instance.
Similar as the proof of Proposition~\ref{lem:contraction}, the proof
of our main lemma would be manipulating intervals.  We also take
liberty to produce a proper interval model instead of a unit interval
model: One can always turn it into a unit interval model by, say,
calling the algorithm of Bogart and West~\cite{bogart-99-proper-unit}.
Another trick we want to play is the following.  Since the set of
endpoints is finite, for any point $\rho$ in an interval model, we can
find a small positive value $\epsilon$ such that there is no endpoint
in $[\rho-\epsilon, \rho)\cup (\rho, \rho+\epsilon]$,---in other
words, there is an endpoint in $[\rho-\epsilon, \rho+\epsilon]$ if and
only if $\rho$ itself is an endpoint.  Note that the value of
$\epsilon$ should be understood as a function, depending on the
interval model as well as the point $\rho$, instead of a constant.

\begin{figure*}[h]
  \centering\footnotesize

    \includegraphics{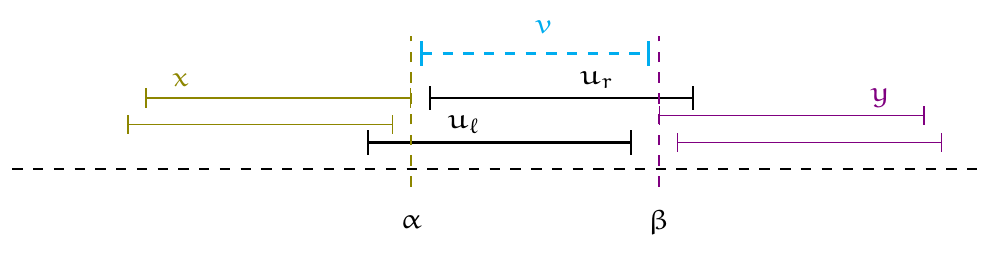}   

    (a) No vertex is adjacent to both $u_\ell$ and $u_r$ but not $v$.\\[2mm]

    \includegraphics{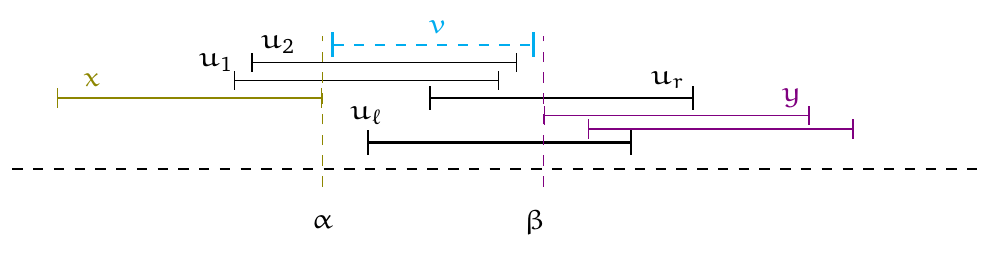}   

    (b) Some non-neighbor of $v$ is adjacent to both $u_\ell$ and
    $u_r$ from the right. 
    \caption{Illustration for Lemma~\ref{lem:fill}.}
  \label{fig:fill}
\end{figure*}

\begin{lemma}\label{lem:fill}
  If there is $V_-$ with $|V_-| \le k$ such that $G' - V_-$ is a unit
  interval graph, then $G - V_-$ is also a unit interval graph.
\end{lemma}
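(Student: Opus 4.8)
The plan is to build a proper interval model for $G - V_-$ by starting
from a proper interval model for $G' - V_-$ and inserting the deleted
vertices of $V(G)\setminus(M\cup V_-)$ back, clique by clique, into the
gaps provided by the clique partition $\mathcal K$. Fix a proper
interval model for $G' - V_-$. For each clique $K_i$, the surviving
picked vertices of $K_i$ occupy some region of the real line; since
$K_i$ is a clique, all their intervals pairwise intersect, so there is a
common point, and moreover Proposition~\ref{lem:partition} tells us that
the only vertices of $G - M$ whose intervals can reach into the
neighborhood of $K_i$ lie in $K_{i-1}\cup K_i\cup K_{i+1}$. The idea is
that the unpicked vertices of $K_i$ behave, with respect to $M$ and with
respect to the neighboring cliques, exactly like some picked vertex of
$K_i$: that is precisely what ``we picked $k+1$ to the left and $k+1$ to
the right of any omitted vertex, for every type'' buys us. Since
$|V_-|\le k$, for any omitted vertex $v$ of $K_i$ and any type it
belongs to, at least one picked representative on each side survives in
$G'-V_-$, and $v$ can be cloned next to that representative.

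Concretely, I would process the cliques $K_1,\ldots,K_t$ in order. When
handling $K_i$, all intervals for the surviving picked vertices of $K_i$
already sit in the model; choose a narrow safe window around a common
point $\rho_i$ of these intervals (using the $\epsilon$-trick from the
paragraph before the lemma, so the window meets no other endpoint). The
omitted-but-surviving vertices of $K_i$ must be slotted into this window
as a clique, each positioned so that its left/right endpoints fall on
the correct side of every relevant feature: the endpoints of the
$M$-vertices it is or is not adjacent to, and the endpoints of the
surviving vertices of $K_{i-1}$ and $K_{i+1}$. Here the four families
$K_i^1,\dots,K_i^4$ (together with $K_i^1(x)$ and $V_0$) do the work:
$K_i^1$ controls adjacency to pairs in $M$; $K_i^2,K_i^3,K_i^4$ control
the three ways a vertex of $K_i$ can interact differently from a vertex
of the adjacent clique with a fixed $x\in M$ (common non-neighbor,
$x$-but-not-$y$, $y$-but-not-$x$); and $V_0$ kills the $M$-only
configurations (three independent vertices with a common neighbor, or a
$P_3$ in $M$ with a vertex seeing only its center) that no clique-local
argument reaches. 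Figure~\ref{fig:fill} is the guide: in the easy case
(a) a single insertion point works for the whole omitted part of $K_i$;
in case (b), where some non-neighbor of the chosen separating vertex
attaches to both ends from one side, one must split the omitted vertices
and route them around that obstruction, but Corollary~\ref{lem:distance}
and Proposition~\ref{lem:partition} confine the obstruction to
$K_{i-1}\cup K_{i+1}$ so only finitely many positions need checking.

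After all cliques are processed, I would verify that the resulting model
is proper and represents exactly $G - V_-$: no interval properly
contains another (each inserted interval is a near-copy of a surviving
picked interval, and the $\epsilon$-windows are disjoint from all
foreign endpoints), every edge of $G-V_-$ is realized, and no spurious
edge is created. The last two checks reduce to: (i) edges within a
clique and between consecutive cliques are inherited from the chosen
representative; (ii) edges to $M$ are correct because the inserted
vertex was placed on the same side of each $x\in M$'s endpoints as its
representative of the appropriate type; (iii) non-edges to far-away
cliques hold by Proposition~\ref{lem:partition} since nothing outside
$K_{i-1}\cup K_i\cup K_{i+1}$ has an endpoint near $\rho_i$. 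Then invoke
Roberts' theorem (a graph with a proper interval model is a unit
interval graph) to finish.

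I expect the main obstacle to be the bookkeeping in the insertion step:
showing that a \emph{single} consistent position (or, in the split case,
two positions) simultaneously satisfies all the side constraints imposed
by the possibly many vertices of $M$ and by both neighboring cliques.
The danger is that the constraints from different $x\in M$ conflict —
one forcing the new interval left of some endpoint, another forcing it
right — so the crux is to argue that because the omitted vertex $v$
genuinely has the same ``type vector'' as a surviving picked vertex $v'$
of $K_i$ (same neighborhood pattern to every pair in $M$, and $v'$
survives in $G'-V_-$), the position of $v'$ in the model for $G'-V_-$ is
itself a witness that a consistent position exists; we then place $v$ in
an $\epsilon$-neighborhood of $v'$. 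Reducing the apparent explosion of
cases to ``copy a surviving representative of the same type'' is exactly
the point of the constructive proof, and getting that reduction airtight
— in particular checking that the type data recorded by
$K^1,\dots,K^4,V_0$ really is a complete invariant for how a vertex of
$K_i$ can sit in any unit interval model of $G-V_-$ — is where the real
argument lies.
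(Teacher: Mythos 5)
Your overall strategy matches the paper's: build a proper interval model for $G-V_-$ by inserting the omitted vertices into a model for $G'-V_-$, then invoke Roberts' theorem. But the central technical claim you rest on --- that every omitted $v\in K_i$ has ``the same type vector as a surviving picked vertex $v'$ of $K_i$ (same neighborhood pattern to every pair in $M$)'' so that $v$ can be cloned in an $\epsilon$-neighborhood of $v'$ --- is not something the picking rules give you, and it is false in general. The families $K_i^1(x_1,x_2)$ et al.\ are defined \emph{per pair}: for each individual pair $(x_1,x_2)$ there are $k+1$ picked vertices on each side of $v$ sharing its $(x_1,x_2)$-pattern, but these are different vertices for different pairs, and nothing forces a single picked vertex to agree with $v$ on all of $M$ simultaneously. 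Indeed, the number of possible full adjacency patterns to $M$ is $2^{|M|}$ while only $O(k^3)$ vertices are picked per clique, so for $|M|=\Theta(k)$ most patterns have no picked representative at all. This is exactly the ``conflicting constraints from different $x\in M$'' danger you yourself flag at the end; you have identified the crux correctly, but your proposed resolution (``$v'$ is a witness'') presupposes a representative that need not exist, so the reduction does not go through as stated.

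The paper's proof never looks for a single matching representative. It inserts vertices one at a time, takes $u_\ell,u_r\in K_b$ on either side of $v$ (guaranteed by the first/last $k+1$ rule), defines a candidate slot $[\alpha,\beta]$ from the nearest non-neighbors of $v$ around $u_\ell$ and $u_r$, and then argues directly that no surviving interval can properly contain $[\alpha,\beta]$: any such $u$ would yield a claw $\{u,x,y,v\}$ in $G_i$, and a case analysis over which of $u,x,y$ lie in $M$ (five cases, using $K^1$, $K^1(x)$, $K^2$, $K^3$, $K^4$, and $V_0$) shows a corresponding claw would already have survived in $G'-V_-$, a contradiction. A second case handles non-neighbors of $v$ adjacent to both $u_\ell$ and $u_r$ from one side. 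That claw case analysis is where the families $K^1,\dots,K^4,V_0$ actually earn their keep, and it is the content your proposal leaves as a promissory note. Two smaller points: processing a whole clique $K_i$ into one window does not work either, since distinct omitted vertices of $K_i$ can have incomparable adjacencies to $M$ and to $K_{i\pm1}$ and hence need distinct positions (hence the vertex-by-vertex induction); and the locality from Proposition~\ref{lem:partition} controls only $G-M$, not the intervals of $M$-vertices, which is precisely why the claw analysis is unavoidable.
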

\begin{proof}
  We build a proper interval model for $G - V_-$ by inserting
  intervals for $V(G) \setminus V(G')$ into a unit interval model for
  $G' - V_-$ as follows.  Note that $G'$ contains $M$, and hence all
  vertices in $V(G) \setminus V(G')$ appear in $\sigma$, which is a
  proper interval ordering for $G - M$.  Let $G_0 = G'$, and let
  $G_i$, for $1\le i\le |V(G) \setminus V(G')|$, denote the subgraph
  induced by $V(G')$ and the first $i$ vertices of $V(G) \setminus
  V(G')$ in $\sigma$.\footnote{Our construction in the proof does not
    rely on any particular ordering, and it can be arbitrary.}  Since
  $G_0$ is a unit interval graph, by inductive reasoning, it suffices
  to show how to build a proper interval model for $G_i - V_-$ out of
  $G_{i - 1} - V_-$.

  Let $v$ be the $i$th vertex of $V(G)\setminus V(G')$, and let $\cal
  I$ be a unit interval model for $G_{i - 1} - V_-$.  Let $K_b$ be the
  clique in the clique partition of $G - M$ that contains $v$.  Since
  $v$ itself is not in $G'$, we have picked from $K_b$ the first $k +
  1$ and the last $k + 1$ vertices; from each of them at least one
  vertex is not in $V_-$.  There are thus vertices $u_\ell, u_r$ in
  $G_{i - 1} - (V_-\cup M)$ such that $u_\ell <_\sigma v<_\sigma u_r$.
  Assume without loss of generality $I(u_\ell)$ is to the left of
  $I(u_r)$.

  Consider first that $N_{G_i}(u_\ell)\setminus N_{G_i}[v]$ and
  $N_{G_i}(u_r)\setminus N_{G_i}[v]$ are disjoint, i.e., for every
  vertex $u\in V(G_{i - 1})\setminus V_-$ that is nonadjacent to $v$,
  the interval $I(u)$ intersects at most one of $I(u_\ell)$ and
  $I(u_r)$.  Let $x$ be the vertex in $N_{G_i}(u_\ell)\setminus
  N_{G_i}[v]$ with the rightmost interval, and let $y$ be the vertex
  in $N_{G_i}(u_r)\setminus N_{G_i}[v]$ with the leftmost interval;
  denote by $\alpha = \rp{x}$ and $\beta= \lp{y}$.  See
  Figure~\ref{fig:fill}(a).  Then
  \[
  \lp{u_\ell} < \alpha < \lp{u_r} < \rp{u_\ell} < \beta <\rp{u_r},
  \]
  and every vertex in $G_{i - 1}\setminus V_-$ with its interval
  properly contained in $[\alpha, \beta]$ has the same closed
  neighborhood as $v$ in $G_i\setminus V_-$.

  We now argue that no interval can contain $[\alpha, \beta]$.
  Suppose for contradiction $[\alpha, \beta]\subseteq I(u)$, then
  $\{u, x, y, v\}$ is a claw of $G_i$.  At least one of these four
  vertices is in $M$, because $G - M$ is a unit interval graph.
  Noting that $v$ is not in $M$, we consider which of $u, x, y$ are in
  $M$.  Note that the other vertices in $V(G_i)\setminus (V_-\cup M)$
  may or may not be in $G'$.
  \begin{itemize}
  \item Case 1, $x,y,u \in M$.  In $V_0$ there are at least $k + 1$
    vertices each of which makes a claw with $\{u, x, y\}$.  At least
    one of them is not in $V_-$ and hence $G' - V_-$ contains a claw.
  \item Case 2, $x,y \in M$ but $u\not\in M$.  Let $u\in K_a$.  We may
    assume $u<_\sigma v$; then $a = b$ or $b - 1$: Since $u$ and $v$
    are adjacent, they are either in the same clique or in two
    consecutive cliques in the partition $\cal K$.  We take $u'$ to be
    the last vertex of $K_a^1(x, y)\setminus V_-$ (it is nonempty
    because $|K_a^1(x, y)| > k$ when $u$ is not in it), and take $v'$
    to be the first of $K_b^1(\overline x, \overline y)\setminus V_-$.
    We claim that $\{u', v', x, y\}$ is always a claw in $G' - V_-$.
    By the selection, it suffices to verify that $u' v'\in E(G)$.  It
    is trivial when $a = b$, and it follows from $u\le_\sigma u'
    <_\sigma v' <_\sigma v$ and
    Proposition~\ref{lem:proper-interval-ordering} when $a = b - 1$.
  \item Case 3, $u$ and one of $x,y$ is in $M$.  We consider $x$ and
    the other is symmetric.  Let $y\in K_c$; note that $c \ne b$
    because $y$ and $v$ are nonadjacent.  We may assume $b < c$.
    We take $v'$ to be the first of $K_b^1(u, \overline x)\setminus
    V_-$, and take $y'$ to be the last of $K_c^1(u, \overline
    x)\setminus V_-$.  They are nonadjacent because $v'<_\sigma
    v<_\sigma y \le_\sigma y'$ and $v y\not\in E(G)$.  Then $\{u, v',
    x, y'\}$ is a claw in $G' - V_-$.
  \item Case 4, only $u$ is in $M$.  Let $x\in K_a$ and $y\in K_c$;
    note that $a, b$, and $c$ are all distinct because $v$, $x$, and
    $y$ are pairwise nonadjacent.  We may assume $a < b < c$ (i.e.,
    $x <_\sigma v <_\sigma y$).
    We take $x'$ to be the first of $K_{a}^1(u)\setminus V_-$, and
    take $y'$ to be the last of $K_{c}^1(u)\setminus V_-$.  They are
    clearly adjacent to $u$ but nonadjacent to each other.
    \begin{itemize}
    \item[4.1.]  If $b > a + 1$, then we take $v'$ to be the first
      of $K_{b}^1(u)\setminus V_-$; it is nonadjacent to $x'$.  It is
      also nonadjacent to $y'$ because $v'<_\sigma v<_\sigma y
      \le_\sigma y'$ and $v y\not\in E(G)$.
    \item[4.2.] Otherwise, we take $v'$ to be the first of $K_{b}^3(u,
      x')\setminus V_-$.  Note that $x'\le_\sigma x <_\sigma v$; by
      Proposition~\ref{lem:proper-interval-ordering} $x'$ is
      nonadjacent to $v$.  As a result, $K_{b}^3(u, x')\setminus V_-$
      is nonempty and $v'<_\sigma v$.  Together with $v<_\sigma y
      \le_\sigma y'$ and $v y\not\in E(G)$, we have $v' y' \not\in
      E(G)$.  The definition of $K_{b}^3(u, x')$ implies $x' v'\not\in
      E(G)$.
    \end{itemize}
    Therefore, $\{u, v', x', y'\}$ is always a claw in $G' - V_-$.
  \item Case 5, only one of $x, y$ is in $M$.  We consider $x$ and the
    other is symmetric.  Then $v u y$ is a $P_3$ in $G_i - V_-$; we
    may assume $v<_\sigma u<_\sigma y$.  Clearly, $v$ and $y$ are
    nonadjacent and hence in different cliques.  
    \begin{itemize}
    \item[5.1.] If none of them is in the same clique as $u$, then
      $u\in K_{b + 1}$ and $y\in K_{b + 2}$.  We take $v'$ to be the
      last of $K_{b}^1(\overline x)\setminus V_-$; take $u'$ to be the
      last of $K_{b + 1}^2(x, v')\setminus V_-$; and take $y'$ to be
      the first of $K_{b + 2}^1(\overline x)\setminus V_-$.  The
      vertex $v'$ is clearly adjacent to $u'$ but not $y'$.  To see
      that $v'$ is not adjacent to $y'$, note $u \le_\sigma u'
      <_\sigma y' \le_\sigma y$.
    \item[5.2.] Otherwise, $y\in K_{b + 1}$ and $u$ is in either
      $K_{b}$ or $K_{b + 1}$.  Assume without loss of generality that
      $u\in K_{b + 1}$.  We take $u'$ to be the first of $K_{b +
        1}^1(x)\setminus V_-$, take $v'$ to be the first of
      $K_{b}^4(x, u')\setminus V_-$, and take $y'$ to be the last of
      $K_{b + 1}^1(\overline x)\setminus V_-$.
    \end{itemize}
    In either case, $\{u', v', x, y'\}$ is a claw of $G' - V_-$.
   \end{itemize}
   Therefore, if no interval of $\cal I$ is contained in $[\alpha,
   \beta]$, then making $I(v) = [\alpha + \epsilon, \beta - \epsilon]$
   would make a proper interval model for $G_{i}$.  Otherwise let
   $[\alpha', \beta']$ be such an interval contained in $[\alpha,
   \beta]$; we can make $I(v) = [\alpha' + \epsilon, \beta' +
   \epsilon]$.

  In the rest there exists at least one non-neighbor $u$ of $v$ such
  that $I(u)$ intersects both $I(u_\ell)$ and $I(u_r)$.  We argue that
  $I(u)$ cannot be contained in $[\lp{u_\ell}, \rp{u_r}]$.  Suppose
  such a vertex $u$ exists, then it is must be from $M$: because
  $u_\ell<_\sigma v<_\sigma u_r$, no vertex in the unit interval graph
  $G - M$ can be adjacent to both $u_\ell$ and $u_r$ but not $v$.
  Since the model is proper, if $I(u)\subseteq [\lp{u_\ell},
  \rp{u_r}]$, then $[\lp{u_r}, \rp{u_\ell}]\subseteq I(u)$.  But we
  had also chosen from $K$ the first $k + 1$ and the last $k + 1$
  non-neighbors of $u$.  At least one of these vertices remains in
  $G_i$ and its interval has to be intersect both $I(u_\ell)$ and
  $I(u_r)$ but not $I(u)$.  This is impossible.

  Therefore, $I(u)$ approaches $I(u_\ell)$ and $I(u_r)$ either from
  the left or the right.  We may assume without loss of generality it
  is to the right of $I(u_r)$ (i.e., $\rp{u_r}\in I(u)$); the other
  case follows by symmetry.  Let us take the non-neighbor $y$ of $v$
  in $G_i$ that has the leftmost interval containing $\rp{u_r}$; let
  $\beta =\lp{y}$.  Then $\beta = \min_u \{\lp{u} : u\in
  V(G_i)\setminus N(v),\; {u_r}\in I(u)\}$, and by assumption, $\beta
  \in I(u_\ell)$.  Since $u_\ell <_\sigma v <_\sigma u_r$, the vertex
  $y$ has to be from $M$.  But then we would have also chosen from $K$
  the first $k + 1$ and the last $k + 1$ non-neighbors of $y$.  At
  least one from either set is in $G_i$; let them be $u_1$ and $u_2$
  respectively.  The intervals $I(u_1)$ and $I(u_2)$ have to approach
  $I(u_\ell)$ from the left.  Again, there cannot be vertices from
  $G_i - M$ adjacent to both $u_1, u_2$ but not $v$.  We argue that
  for any vertex $x\not\in N(v)$ adjacent to $u_1$ and/or $u_2$, the
  interval $I(x)$ is disjoint from and to the left of $I(u_\ell)$
  (i.e., $\rp{x} < \lp{u_\ell}$).  We have also chosen from $K$ the
  first $k + 1$ and the last $k + 1$ vertices that are adjacent to
  neither $x$ nor $y$.  At least one of them is in $G_i - V_-$ and its
  interval has to be accommodated between ($\rp{x}, \lp{y}$).  It
  would then be properly contained in $I(u_\ell)$ if $\rp{x} >
  \lp{u_\ell}$.  Let $\alpha = \max_u \{\rp{u}: u\in V(G_i)\setminus
  N(v),\; \lp{u_1}\in I(u)\}$.  See Figure~\ref{fig:fill}(b).  The
  rest of the construction is the same as the first
  one.  
\end{proof}

\section{Implementation issues and concluding remarks}\label{sec:remarks}

In principle, each application of
Rules~\ref{rule:star-1}--\ref{rule:smooth} should be followed by a
re-calculation of the modulator: After the application, the rest of
$M$ (it loses one vertex with Rules~\ref{rule:star-1}
and~\ref{rule:star-2} but remains intact with Rules~\ref{rule:smooth})
may not be a $6$-approximation for the remaining graph.  This would
imply that it takes $O(n \cdot n m)$ time to exhaustively apply
Rules~\ref{rule:star-1}--\ref{rule:smooth}.

We have been using the approximation algorithm
\cite{cao-15-unit-interval-editing} as a black box for furnishing the
modulator.  To have a better analysis, we may have to unwrap the black
box and see a bit of how it works.  It consists of two phases.  The
first phase keeps looking for a claw, net, tent, $C_4$, or $C_5$, and
deletes all its vertices if one is found.  When none of these small
\fis{s} can be found, the algorithm enters the second phase, which
then finds an optimal solution in linear time.  The ratio is $6$
because unit interval graphs are hereditary and any optimal solution
needs to contain at least one vertex from any induced claw, net, tent,
$C_4$, or $C_5$.  Recall that whether a graph contains a claw, net,
tent, $C_4$, or $C_5$ can be decided in linear time, and if yes, one
can be detected in the same time.

Consider first Rules~\ref{rule:star-1} and~\ref{rule:star-2}, each of
which deletes a vertex from $M$.  If the deleted vertex $v$ had been
added to $M$ in the second phase of the approximation algorithm, then
the set $M - \{v\}$ is still a $6$-approximation for $G - \{v\}$, and
we need to do nothing.  Otherwise, we need to (re-)calculate a new
approximation solution for $G - \{v\}$.  Fortunately, we do not need
to start from scratch.  Recall that $v$ had been put into $M$ because
it is in some induced claw, net, tent, $C_4$, or $C_5$ found in phase
1; let $X$ be the at most six vertices of this \fis{}.  Let $M'$
denote the subset of vertices of $M\setminus X$ that are added in the
first phase; they are still \emph{good} in the sense that they still
form vertex-disjoint claws, nets, tents, $C_4$'s, and $C_5$'s.
Therefore, we may start the approximation algorithm with $M'$ as the
partial solution.  Note that every claw, net, tent, $C_4$, or $C_5$ in
$G - \{v\} - M'$ needs to intersect $X\setminus\{v\}$.  Therefore, we
can find at most six vertex-disjoint claws, nets, tents, $C_4$'s, and
$C_5$'s, which can be done in $O(m) * 6 = O(m)$ time.  We put all the
vertices in the found subgraphs, and redo the second phase in another
$O(m)$ time.  Consequently, we can produce a $6$-approximation for the
new graph $G - \{v\}$ in $O(m)$ time.

The situation for Rule~\ref{rule:smooth} is actually simpler.  It does
not touch $M$, and thus all vertices added in the first phase remain
good.  We can redo the second phase in $O(m)$ time to produce an
approximation solution for the new graph.\footnote{This step is not
  really necessary, because it can be proved that $M$ remains a
  $6$-approximation of the new graph after the application of
  Rule~\ref{rule:smooth}.  Moreover, the impact of
  Rule~\ref{rule:smooth} on the interval model and the clique
  partition of $G - M$ is local, and a new model and a new partition
  can be easily recovered.  For the simplicity of presentation, the
  form of Rule~\ref{rule:smooth} given in Section~\ref{sec:reductions}
  deletes only one clique.  One can show that it can be easily adapted
  to contracting all but $6$ cliques in a sequence of cliques in $\cal
  K$ that are nonadjacent to $M$.  Moreover, all the cliques can be
  handled in one run, in linear time.  To prove these facts, however,
  we need to revisit the approximation algorithm
  \cite{cao-15-unit-interval-editing} with all the details, which we
  omit to not blur the focus of the current paper.}

Therefore, we can apply each reduction rule and presently recover the
modulator in $O(m)$ time.  On the other hand, since each application
of a reduction rule deletes at least one vertex from the graph, they
can be applied at most $n$ times.  The total running time of them is
$O(n m)$.  The picking of the vertices can be easily done in $O(k^3
m)$ time.  Note that if $n < k^4$, then we do not need to do anything
at all.  Therefore, the running time of the whole kernelization
algorithm is $O(n m)$.  

The primary concern of a kernelization algorithm is surely the kernel
size.  Kernelization algorithms may not completely solve the instance,
and then they are followed by other algorithmic approaches.  Its
applicability would thus be limited if the running time is too high.
In literature, however, very little attention has been paid to the
running time of most kernelization algorithms, and most of the time, a
detailed analysis is omitted.  (Most of them are trivially
polynomial.)  Once we aim for ``efficient'' kernelization with lower
polynomial running time, we need to reconsider the tools we can use.
For example, for all vertex deletion problems to hereditary graph
class, we have a trivially correct reduction rule that deletes all
vertices not participating in any forbidden induced subgraphs.  This,
however, is usually very time-consuming (it takes normally $n^{|X|}$
where $X$ is the largest forbidden induced subgraph) and thus should
be avoided.

As a final remark, properties of the approximation algorithm
\cite{cao-15-unit-interval-editing} may be further exploited to
sharpen the analysis of the kernel size of our kernelization
algorithm.  But it would very unlikely lead to one with $o(k^2)$
vertices.  We leave the existence of a linear-vertex kernel for the
\uivd{} problem as an open question.

\paragraph{Acknowledgments.} The authors want to thank Jinshan Gu,
R.~B.~Sandeep, and Jie You for fruitful discussions during an early
stage of this project.

{
  \small 
  \bibliographystyle{plainurl}
  \bibliography{../../journal,../../main}
}
\end{document}